\newcommand{\INLIF}[2]{\STATE\algorithmicif\ #1\ \algorithmicthen\ #2\ \algorithmicendif}
\newcommand{\INLIFELSE}[3]{\STATE\algorithmicif\ #1\ \algorithmicthen\ #2\ \algorithmicelse\ #3\ \algorithmicendif}
\newcommand{\RET}{\algorithmicreturn\ }
\newcommand{\algorithmicmatch}{\textbf{match}}
\newcommand{\algorithmicendmatch}{\textbf{end match}}
\newcommand{\MATCH}[1]{\STATE\algorithmicmatch\ #1\begin{ALC@g}}
\newcommand{\ENDMATCH}{\end{ALC@g}\STATE\algorithmicendmatch}
\newcommand{\algorithmiccase}{\textbf{case}}
\newcommand{\algorithmicendcase}{}
\newcommand{\CASE}[1]{\STATE\algorithmiccase\ #1:\begin{ALC@g}}
\newcommand{\ENDCASE}{\end{ALC@g}\algorithmicendcase}
\renewcommand{\INLIF}[2]{\STATE\algorithmicif\ #1\ \textbf{then}\ #2}
\renewcommand{\INLIFELSE}[3]{\STATE\algorithmicif\ #1\ \textbf{then}\ #2\ \textbf{else}\ #3}
\tikzstyle{state}=[circle,draw,inner sep=2pt,node distance=50pt]
\tikzstyle{marked}=[state,double]
\tikzstyle{trans}=[-angle 90]
\tikzstyle{acc}=[node distance=15pt,align=center,font=\scriptsize]
\DeclareMathOperator{\dom}{dom}
\DeclareMathOperator{\ready}{ready}
\DeclareMathOperator{\Acc}{Acc}
\DeclareMathOperator{\pre}{pre^{*}}
\DeclareMathOperator{\prep}{pre^{+}}
\DeclareMathOperator{\post}{post^{*}}
\DeclareMathOperator{\postp}{post^{+}}
\DeclareMathOperator{\Cycle}{Loop}
\DeclareMathOperator{\Dead}{DeadFree}
\DeclareMathOperator{\Live}{LiveFree}
\DeclareMathOperator{\Un}{Un}
\DeclareMathOperator{\Compat}{Compat}
\newcommand{\mass}{MAS}
\newcommand{\masss}{MAS}
\newcommand{\massp}{MASp}
\newcommand{\prequotient}{/\!\!/}
\newcommand{\compreach}{\ensuremath{\sim_\mathcal{T}}}
\newcommand{\unncomp}{\ensuremath{\sim_\mathcal{U}}}
\newcommand{\CycleI}{\Cycle_{\models}}
\title{Quotient of Acceptance Specifications under Reachability
Constraints}
\author{Guillaume Verdier \and Jean-Baptiste Raclet}
\institute{IRIT/CNRS, 118 Route de Narbonne, F-31062 Toulouse Cedex 9, France\\
\email{$\{$verdier,raclet$\}$@irit.fr}}
\begin{document}

\maketitle

\begin{abstract}
  The quotient operation, which is dual to the composition, is crucial
  in specification theories as it allows the synthesis of missing
  specifications and thus enables incremental design. In this paper, we
  consider a specification theory based on marked acceptance
  specifications (MAS) which are automata enriched with variability
  information encoded by acceptance sets and with reachability
  constraints on states. We define a sound and complete quotient for
  MAS hence ensuring reachability properties by construction.
\end{abstract}

\section{Introduction}

Component-based design aims at building complex reactive systems by
assembling components, possibly taken off-the-shelf. This approach can
be supported by a specification theory in which requirements
correspond to specifications while components are models of the
specifications. Such theories come equipped with a set of operations
enabling modular system design.

Several recent specification theories are based on modal
specifications~\cite{Raclet2011,LuttgenV13,ChenCJK12}
including in timed~\cite{Bertrand2011,David2010} or
quantitative~\cite{BauerFJLLT13} contexts and with
data~\cite{BauerLLNW14}. In this paper, we introduce \emph{marked
  acceptance specifications} (MAS): they are based on an extension of
modal specifications, called acceptance specifications, which we
enrich with marked states to model reachability objectives. This last
addition is needed to model session terminations, component
checkpoints or rollbacks.

A crucial feature in a specification theory is the operation of
quotient. Let~$S_1$ be the specification of a target system and $S_2$
be the specification of an available black-box component. The
specification $S_1 / S_2$ characterizes all the components that, when
composed with any model of $S_2$, conform with $S_1$. In other words,
$S_1 / S_2$ tells what remains to be implemented to realize $S_1$
while reusing a component doing $S_2$. By allowing to characterize
missing specifications, quotient thus enables incremental design
and component reuse.

The quotient of specifications also plays a central role in
contract-based design. In essence, a contract describes what a system
should guarantee under some assumptions about its context of use. It
can be modeled as a pair of specifications $(A, G)$ for, respectively,
the assumptions and the guarantees. Satisfiability of a contract then
corresponds to the satisfiability of the specification $G / A$
(see~\cite{Benveniste2011} for more explanations on contract
satisfaction). 

\emph{Our contribution.} Firstly, we define MAS and their
semantics. The included marked states allow to specify reachability
objectives that must be fulfilled by any model of the MAS. A MAS then
characterizes a set of automata called \emph{terminating} as
they satisfy the reachability property telling that a marked state
can always be reached.

Secondly, we study the compositionality of MAS. We define a
compatibility criterion such that two MAS $S_1$ and $S_2$ are
compatible if and only if the product of any models of $S_1$ and $S_2$
is terminating.
Further, given two incompatible MAS $S_1$ and $S_2$, we propose a
construction to refine $S_1$ into the most general $S'_1$ such that
$S'_1$ and $S_2$ become compatible.

Last, we define the quotient of MAS. This is a two-step construction
that makes use of the previous cleaning construction.  The operation
is shown to be sound and complete. 

\emph{Related work.} 
Modal specifications~\cite{KT88} enriched with marked states (MMS)
have been introduced in~\cite{Darondeau2010} for the supervisory
control of services. Product of MMS has been investigated
in~\cite{CaillaudR12}. These papers did not show the need for the more
expressive framework of MAS as quotient was not considered.
Acceptance specifications have first been proposed
in~\cite{Raclet2008} based on~\cite{Hennessy85}. Their
non-deterministic version is named Boolean MS
in~\cite{BenesKLMS11}. The LTL model checking of MS has been studied
in~\cite{BenesCK11}. However, the reachability considered in this
paper can be stated in CTL by \verb+AG(EF(final))+ and cannot be
captured in LTL.

Quotient of modal and acceptance specifications has been studied
in~\cite{Raclet2008,ChenCJK12} and in~\cite{BenesDFKL13}
for the non-deterministic case. It has also been defined for
timed~\cite{Bertrand2011,David2010} and
quantitative~\cite{BauerFJLLT13} extensions of modal
specifications. None of these works consider reachability constraints.

\emph{Outline of the paper.} We recall some definitions about automata
and introduce \masss{} in \autoref{sec:definitions}. Then, we define
the pre-quotient operation in \autoref{sec:prequo} which only
partially solves the problem as it does not ensure the reachability of
marked states. In \autoref{sec:compreach}, we give a criterion of
compatible reachability telling whether the product of the models of
two MAS is always terminating. When this condition of compatible
reachability is not met, it is possible to impose some constraints on
one of the specifications in order to obtain it, as shown in
\autoref{sec:clean}. Based on this construction,
\autoref{sec:quotient} finally defines the quotient operation on MAS.
\section{Modeling with Marked Acceptance Specifications}
\label{sec:definitions}

\subsection{Background on automata}

A (deterministic) \emph{automaton} over an alphabet $\Sigma$ is a
tuple $M = (R, r^0, \lambda, G)$ where $R$ is a finite set of states,
$r^0 \in R$ is the initial state, $\lambda : R \times \Sigma
\rightharpoonup R$ is the labeled transition map and $G \subseteq R$
is the set of marked states. The set of \emph{fireable} actions from a
state $r$, denoted $\ready (r)$, is the set of actions $a$ such that
$\lambda (r, a)$ is defined.

Given a state $r$, we define $\pre(r)$ and $\post(r)$ as the smallest sets such
that $r \in \pre(r)$, $r \in \post(r)$ and for any $r'$, $a$ and $r''$ such that
$\lambda(r', a) = r''$, $r' \in \pre(r)$ if $r'' \in \pre(r)$ and $r'' \in
\post(r)$ if $r' \in \post(r)$. We also define $\prep(r)$ as the union of
$\pre(r')$ for all $r'$ such that $\exists a: \lambda(r', a) = r$ and
$\postp(r)$ as the union of $\post(\lambda(r, a))$ for all $a \in \ready(r)$.
Let $\Cycle(r) = \prep(r) \cap \postp(r)$.

Two automata $M_1$ and $M_2$ are bisimilar iff there exists a
simulation relation $\pi: R_1 \times R_2$ such that $(r_1^0, r_2^0)
\in \pi$ and for all $(r_1, r_2) \in \pi$, $\ready (r_1) = \ready
(r_2) =\nobreak Z$, $r_1 \in G_1$ iff $r_2 \in G_2$ and for any $a \in Z$,
$(\lambda (r_1, a), \lambda (r_2, a)) \in \pi$.

The \emph{product} of two automata $M_1$ and $M_2$, denoted $M_1
\times M_2$, is the automaton $(R_1 \times R_2, (r_1^0, r_2^0),
\lambda, G_1 \times G_2)$ where $\lambda ((r_1, r_2), a)$ is
defined as the pair $(\lambda_1 (r_1, a), \lambda_2 (r_2, a))$ when
both $\lambda_1 (r_1, a)$ and $\lambda_2 (r_2, a)$ are defined.

Given an automaton $M$ and a state $r$ of $M$, $r$ is a
\emph{deadlock} if $r \not\in G$ and $\ready(r) = \emptyset$; $r$
belongs to a \emph{livelock} if $\Cycle(r) \neq \emptyset$, $G \cap
\Cycle(r) = \emptyset$ and there is no transition $\lambda(r', a) =
r''$ such that $r' \in \Cycle(r)$ and $r'' \not\in \Cycle(r)$. An
automaton is \emph{terminating} if it is deadlock-free and
livelock-free.

\subsection{Marked Acceptance Specification}

We now enrich acceptance specifications~\cite{Raclet2008} with marked
states to model reachability constraints. The resulting formalism
allows to specify a (possibly infinite) set of terminating automata
called models.

\begin{definition}[MAS]
  A \emph{marked acceptance specification} (MAS) over an alphabet
  $\Sigma$ is a tuple $S = (Q, q^0, \delta, \Acc, F)$ where $Q$ is a
  finite set of states, $q^0 \in Q$ is the initial state,
  $\delta : Q \times \Sigma \rightharpoonup Q$ is the labeled
  transition map, $\Acc: Q \rightarrow 2^{2^\Sigma}$ associates to
  each state its acceptance set and $F \subseteq Q$ is a set of marked
  states. 
\end{definition}

Basically, an acceptance set is a set of sets of actions a model of
the specification is ready to engage in. The \emph{underlying}
automaton associated to $S$ is $\Un(S) = (Q, q^0, \delta, F)$. We only
consider MAS such that $\Un(S)$ is deterministic.

\begin{definition}[Satisfaction]\label{def:sat}
  A terminating automaton $M$ \emph{satisfies} a \mass{} $S$, denoted
  $M \models S$, iff there exists a simulation relation $\pi \subseteq
  R \times Q$ such that $(r^0, q^0) \in \pi$ and for all $(r, q) \in
  \pi$: $\ready (r) \in \Acc (q)$; if $r \in G$ then $q \in F$; and, for
  any $a$ and $r'$ such that $\lambda (r, a) = r'$, $(r', \delta (q,
  a)) \in \pi$. $M$ is called a \emph{model} of~$S$.
\end{definition}

\begin{example}
  A MAS is depicted in \autoref{fig:prequo:S1}. Marked states are
  double-circled while the acceptance sets are indicated near their
  associated state. The terminating automata $M'$ and $M''$ in
  \autoref{fig:mod:M11} and \autoref{fig:mod:M12} are models of $S_1$
  because of the respective simulation relation $\pi' = \{ (0',0),
  (1',1) \}$ and $\pi'' = \{ (0'',0), (1'',0), (2'',1) \}$.
  Observe that the transitions labeled by $b$ and $c$ are optional in
  state $0$ from the MAS $S_1$ as these actions are not present in all
  sets in $\Acc(0)$ and thus may not be present in any model of the
  specification. Moreover, state $1$ in $S_1$ is marked to encode the
  constraint that it must be simulated in any model. As a result,
  although the actions $b$ and $c$ are optional, at least one of the
  two must be present in any model of $S_1$. This kind of constraint
  entails that MAS are more expressive than MS.
\end{example}

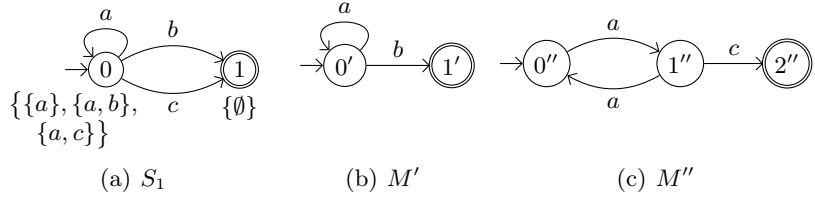
\begin{figure}
\centering
\subfigure[$S_1$]{
  \begin{tikzpicture}
    \node[state] (q0) {$0$};
    \node[below left=0pt and -20pt of q0,align=center] {$\bigl\{\{a\},\{a,b\},$\\$\{a,c\}\bigr\}$};
    \node[marked,right of=q0] (q2) {$1$};
    \node[below=0pt of q2] {$\{\emptyset\}$};
    \draw[trans] ($ (q0.west) + (-.3, 0) $) to (q0);
    \draw[trans,min distance=20pt] (q0) to node[above] {$a$} (q0);
    \draw[trans,bend left] (q0) to node[above] {$b$} (q2);
    \draw[trans,bend right] (q0) to node[below] {$c$} (q2);
  \end{tikzpicture}
  \label{fig:prequo:S1}
}
\subfigure[$M'$]{
  \begin{tikzpicture}
    \node[state] (q0) {$0'$};
    \node[marked,right of=q0,node distance=40pt] (q2) {$1'$};
    \draw[trans] ($ (q0.west) + (-.3, 0) $) to (q0);
    \draw[trans,min distance=20pt] (q0) to node[above] {$a$} (q0);
    \draw[trans] (q0) to node[above] {$b$} (q2);
    \node[below=20pt of q0] {$\ $};
  \end{tikzpicture}
  \label{fig:mod:M11}
}
\subfigure[$M''$]{
  \begin{tikzpicture}
    \node[state] (q0) {$0''$};
    \node[state,right of=q0] (q1) {$1''$};
    \node[marked,right of=q1,node distance=40pt] (q4) {$2''$};
    \draw[trans] ($ (q0.west) + (-.3, 0) $) to (q0);
    \draw[trans,bend left] (q0) to node[above] {$a$} (q1);
    \draw[trans,bend left] (q1) to node[below] {$a$} (q0);
    \draw[trans] (q1) to node[above] {$c$} (q4);
    \node[below=20pt of q0] {$\ $};
  \end{tikzpicture}
  \label{fig:mod:M12}
}
\caption{Example of MAS with two models}
\label{fig:mas}
\end{figure}

The introduced semantic induces some simplifications in
the structure of the \mass{} that we discuss now. This will then lead
to the definition of an associated normal form. \\

\begin{compactitem}
\item \emph{attractability.} A \mass{} is said attracted in $q$ when
  $\post(q) \cap F \neq \emptyset$.

\item \emph{$\Acc$-consistency.} A state $q$ is $\Acc$-consistent when
  $\Acc(q) \neq \emptyset$.

\item \emph{$F,\Acc$-consistency.} A state $q$ is $F,\Acc$-consistent
  when $\emptyset \in \Acc(q)$ implies $q \in F$.

\item \emph{$\delta,\Acc$-consistency.} A state $q$ is
  $\delta,\Acc$-consistent when, for any action $a\in\Sigma$,
  $\delta(q,a)$ is defined if and only if there exists an $X \in
  \Acc(q)$ such that $a \in X$.
\end{compactitem}

\begin{remark}\label{rk:sem}
  When $\Acc(q) = \emptyset$, $q$ cannot belong to a simulation
  relation stating that $M \models S$ as we cannot find an $X \in
  \Acc(q)$ such that $\ready(r) = X$ for some $r$. Moreover, when
  $\emptyset \in \Acc(q)$ and $(r,q) \in \pi$, we can have $\ready(r)
  = \emptyset$ that is, there is no outgoing transition from $r$. As
  $M$ is terminating, this requires that $r$ is marked and thus, $q$
  is also marked.
\end{remark}

 \begin{definition}[Normal form]\label{def-nf}
   A \mass{} is in \emph{normal form} if it is attracted,
   $\Acc$-consistent, $F,\Acc$-consistent and $\delta,\Acc$-consistent in
   every state $q$.
\end{definition}

\begin{theorem}\label{prop:nf}
  Every marked acceptance specification is equivalent to a marked
  acceptance specification in normal form.
\end{theorem}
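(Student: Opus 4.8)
The plan is to transform an arbitrary MAS $S$ step by step, establishing each of the four normal-form properties by a local rewriting that preserves the set of models $\setOfModels{S}$. Since the four consistency conditions interact (enforcing $\delta,\Acc$-consistency may destroy $\Acc$-consistency in a predecessor, and removing non-attracted states may create deadlocks), the order of the transformations matters, and I would process them roughly inside-out: first repair the acceptance sets state-by-state, then propagate emptiness upward, then prune.

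**First**, I would enforce $F,\Acc$-consistency and $\delta,\Acc$-consistency \emph{together} at each state. For a fixed state $q$, replace $\Acc(q)$ by
\[
  \Acc'(q) = \{\, X \in \Acc(q) \mid X = \{\, a \in \Sigma \mid \delta(q,a)\text{ defined}\,\} \text{ restricted appropriately}\,\}
\]
— more precisely, intersect every $X\in\Acc(q)$ with the set $D(q)=\{a : \delta(q,a)\text{ defined}\}$ and then discard those $X$ that, after intersection, still fail to be "covered'' by $\delta$, i.e. keep only sets $X\subseteq D(q)$ such that every $a\in D(q)$ lies in \emph{some} retained set. The justification that this does not change the models is exactly Remark~\ref{rk:sem}: in any simulation $\pi$ with $(r,q)\in\pi$ we must have $\ready(r)\in\Acc(q)$, and simultaneously $\ready(r)\subseteq D(q)$ (a model cannot fire an action the spec forbids) and conversely $\delta,\Acc$-consistency does not constrain which $X$ a model picks — it only asks that the transition structure of $S$ match the union of possibilities. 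Likewise, if $\emptyset\in\Acc(q)$ but $q\notin F$, then no model can use the choice $\ready(r)=\emptyset$ at $q$ (a terminating automaton would have an unmarked deadlock), so deleting $\emptyset$ from $\Acc(q)$ is sound; this yields $F,\Acc$-consistency. One must check these two repairs are compatible and can be done in a single pass over all states.

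**Second**, I would handle $\Acc$-consistency by a fixpoint/backward-propagation argument. After the first step some state $q$ may have $\Acc(q)=\emptyset$; by Remark~\ref{rk:sem} such a $q$ can never appear in a satisfying simulation, so it must be unreachable in every model, which forces, for every predecessor $q'$ with $\delta(q',a)=q$, that no model fires $a$ from $q'$ — hence every $X\in\Acc(q')$ with $a\in X$ is unusable and can be deleted, and the transition $\delta(q',a)$ itself removed. Iterating this (it terminates since $Q$ is finite and we only ever shrink) either stabilises with all states $\Acc$-consistent, or empties $\Acc(q^0)$, in which case $\setOfModels{S}=\emptyset$ and $S$ is equivalent to any fixed normal-form MAS with no models (e.g. a single unmarked state with $\Acc=\{\{a\}\}$ for some $a$ but $\delta$ undefined — itself put in normal form, or more simply we argue the empty-model case separately). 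I then re-run the first step on the trimmed specification, since deleting transitions can break $\delta,\Acc$-consistency; arguing that the combined process still terminates is the delicate bookkeeping point.

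**Third**, attractability: restrict the state set to $\post^{-1}(F)\cup\{q : \dom\}$ — concretely, delete every state $q$ with $\post(q)\cap F=\emptyset$ together with all transitions touching it. Soundness: if $q$ is not attracted, then in any model $M$ with $(r,q)\in\pi$, the state $r$ has $\post(r)\cap G=\emptyset$ (simulation pushes marked model states to marked spec states and preserves reachability), so $r$ lies in a deadlock or livelock, contradicting that $M$ is terminating; hence $q$ is unreachable in every model and can be excised. Again this may re-break $\delta,\Acc$- and $\Acc$-consistency at the boundary, so I would either interleave all three repairs in one outer fixpoint loop, or — cleaner for the write-up — observe that the three "deletion'' operations (kill non-attracted states, kill $\Acc$-empty states, kill uncovered acceptance choices) are all monotone shrinkings with a common greatest fixpoint, compute that fixpoint, and only then do the single cosmetic pass that intersects each $\Acc(q)$ down to subsets of $D(q)$ and drops $\emptyset$ when $q\notin F$. **The main obstacle** is precisely this confluence/termination argument: showing that the interleaved repairs do not cycle and that the order-independent fixpoint genuinely satisfies all four conditions simultaneously, plus the boundary case where the initial state itself gets deleted and one must exhibit an explicit normal-form MAS with empty model set. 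Throughout, the semantic invariant — every transformation preserves $\setOfModels{\cdot}$ — is proved by the same template: a simulation witnessing $M\models S$ never visits a deleted state (by Remark~\ref{rk:sem} and the terminating hypothesis), hence is still a valid simulation for the transformed MAS, and conversely the transformed MAS only ever has \emph{fewer} acceptance choices and states, so any $M$ satisfying it satisfies $S$ a fortiori.
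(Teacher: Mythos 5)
Your overall strategy is the same as the paper's: apply local, model-preserving repairs iteratively (remove $\Acc$-inconsistent and non-attracted states together with their incoming transitions, drop $\emptyset$ from $\Acc(q)$ when $q \notin F$, reconcile $\delta$ with $\Acc$), justify each repair via the semantic observations of Remark~\ref{rk:sem} and the hypothesis that models are terminating, treat the empty-model case with a dedicated no-model specification $S_\bot$, and terminate because the number of states or the total size of the acceptance sets strictly decreases. The confluence issue you worry about is actually a non-issue in this scheme: since every individual repair preserves $\setOfModels{S}$ exactly, any order of application yields an equivalent specification and any fixpoint reached satisfies all four conditions; no greatest-fixpoint argument is needed, and this is exactly how \autoref{algo:fn} and its correctness argument proceed in the paper.

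There is, however, one genuine flaw: your first repair is not model-preserving as stated. For $\delta,\Acc$-consistency you propose to intersect every $X \in \Acc(q)$ with $D(q) = \{a \mid \delta(q,a)\ \text{defined}\}$. This can strictly enlarge the model set: take $\Acc(q) = \{\{a,b\}\}$ with $\delta(q,a)$ defined and $\delta(q,b)$ undefined. In the original specification no model state can be matched to $q$ (its ready set would have to be $\{a,b\}$, yet firing $b$ would require $\delta(q,b)$ to be defined by item 3 of \autoref{def:sat}), whereas after intersection $\Acc'(q) = \{\{a\}\}$ admits models realizing $\{a\}$ at $q$. The correct repair, as in \autoref{algo:fn}, is to delete every $X \in \Acc(q)$ containing an action $a$ with $\delta(q,a)$ undefined (possibly emptying $\Acc(q)$, which then triggers removal of $q$), and, for the converse direction, to undefine $\delta(q,a)$ for every $a$ belonging to no set of $\Acc(q)$; your subsequent ``coverage'' filter does not achieve the latter either, since after intersection every retained set is trivially within $D(q)$ and discarding acceptance sets cannot cure an action of $D(q)$ that no acceptance set mentions. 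With this repair substituted, the rest of your argument (backward propagation of $\Acc(q) = \emptyset$, pruning of non-attracted states, the model-set invariant, and the explicit $S_\bot$ boundary case) matches the paper's proof.
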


First, we introduce a specification $S_\bot$ that admits no model. We
assume that $S_\bot$ is in normal form. Now proof of the previous
theorem is by construction of a MAS in normal form $\rho({S})$ and
then proving that $S$ and $\rho({S})$ are equivalent. This
construction is detailed in \autoref{algo:fn} which returns a MAS
corresponding to $\rho({S})$; it defines a pruning operation which
removes all the states which are not attracting or not
$\Acc$-consistent and updates the acceptance sets and the transition
map to enforce $F,\Acc$-consistency and $\delta,\Acc$-consistency.

\begin{proof}

  Following \autoref{def-nf}, four cases may lead to detect that a
  \mass{} $S$ is not in normal form. For each of them, we associate a
  construction rule to obtain $\rho({S})$ that we analyze now:

\begin{compactitem}
\item if $\neg \Acc$-consistent($q$) then $q$ cannot belong to a
  simulation relation stating that $M \models S$ as explained in
  \autoref{rk:sem}. We thus remove the state from $S$ (lines
  \ref{alg:c1:start} to \ref{alg:c1:end} of \autoref{algo:fn});
\item if $\neg$ attracted($q$) then no marked state is reachable from
  $q$ in $S$. We thus cannot include $q$ in a simulation relation to
  build a terminating model. As in the previous case, we remove $q$
  from $S$ (lines \ref{alg:c1:start} to \ref{alg:c1:end} of
  \autoref{algo:fn});
\item if $\neg F,\Acc$-consistent($q$), we remove $\emptyset$ from
  $\Acc(q)$ (line \ref{alg:c2} of \autoref{algo:fn}). This is a direct
  consequence of the fact that $\emptyset \in \Acc(q)$ is only
  relevant when $q$ is marked as advocated in \autoref{rk:sem};
\item last, the combination of items 1 and 3 of \autoref{def:sat}
  indicates that a ready set $X$ is relevant in $q$ if and only if the
  transition map is defined from $q$ for any $a \in X$. In lines
  \ref{alg:c3:start} to \ref{alg:c3:end} of \autoref{algo:fn}, we then
  update the acceptance sets and the transition map to make them
  consistent. 
\end{compactitem}

As a result, none of these four cases affects the set of models of
the received \mass{}. These construction rules are iteratively applied
until a fix-point is reached. The algorithm always finishes as at
least the (finite) number of states or the (finite) size of the
acceptance sets strictly decreases. \qed
\end{proof}

\begin{algorithm}
\caption{normal\_form ($S$: \mass): \mass}
\label{algo:fn}
\begin{algorithmic}[1]
  \STATE $S'$ $\leftarrow$ $S$ 
  \REPEAT 
    \STATE unchanged $\leftarrow$ $true$ 
    \FORALL {$q' \in Q'$}
       \IF{$\neg$ attracted($q'$) $\vee$ $\neg \Acc$-consistent($q'$)}  
           \STATE unchanged $\leftarrow$ $false$ \label{alg:c1:start}
           \FORALL {$q \in Q'$ such that $\delta'(q,a) = q'$}
              \STATE{$\delta'(q,a) =$ undefined}
           \ENDFOR
           \STATE{$Q'$ $\leftarrow$ $Q' \setminus \{ q' \}$}\label{alg:c1:end}
       \ENDIF
       \IF{$\neg F,\Acc$-consistent($q'$)}  
           \STATE unchanged $\leftarrow$ $false$ 
           \STATE{$\Acc'(q')$ $\leftarrow$ $\Acc'(q') \setminus \{ \emptyset \}$} \label{alg:c2}
       \ENDIF
       \IF{$\neg \delta,\Acc$-consistent($q'$)}  
           \STATE unchanged $\leftarrow$ $false$ \label{alg:c3:start}
           \FORALL{$a \in \ready'(q') \setminus \bigcup \Acc'(q')$}
              \STATE{$\delta'(q',a) =$ undefined}
           \ENDFOR
           \STATE $\Acc'(q')$ $\leftarrow$ $\{ X \in \Acc'(q') \mid \forall a \in X: \ \delta'(q',a) \mbox{ is defined } \}$ \label{alg:c3:end}
       \ENDIF
       \IF{$Q' = \emptyset$}  
           \RETURN{$S_\bot$}
       \ENDIF
     \ENDFOR
  \UNTIL{unchanged}
  \RETURN $S'$
\end{algorithmic}
\end{algorithm}

As a result of \autoref{prop:nf}, from now on and without loss of
generality, we always assume that \mass{} are in normal form.

At this point, the reader may wonder why attractability and the
previous different forms of consistency are not fully part of the
definition of \masss{}. The reason for this is because, in what
follows, we propose composition operators on \masss{} and it is easier
to define these constructions without trying to preserve these
different requirements. Now if the combination of two \masss{} (which
are now implicitly supposed to be in normal form) gives rise to a
specification violating one of the above requirements then a step of
normalization has to be applied on the result in order to have an
iterative process.
\section{Pre-Quotient Operation of MAS}
\label{sec:prequo}

We first define an operation called \emph{pre-quotient}. Given two
\masss{} $S_1$ and $S_2$, it returns a \mass{} $S_1 \prequotient S_2$
such that the product of any of its models with any model of $S_2$,
\emph{if terminating}, will be a model of $S_1$. Another operation,
defined in \autoref{sec:clean}, will then be used in
\autoref{sec:quotient} to remove the \emph{``if terminating''} assumption.

\begin{definition}[Pre-quotient]
  \label{def:prequo}
  The \emph{pre-quotient} of two \masss{} $S_1$ and $S_2$, denoted $S_1
  \prequotient S_2$, is the \mass{} $(Q_1 \times Q_2, (q_1^0, q_2^0), \delta,
  \Acc, F)$ with:
  \begin{compactitem}
    \item $\Acc (q_1, q_2) = \{X \mid (\forall X_2 \in \Acc_2 (q_2): X \cap X_2
      \in \Acc_1 (q_1)) \land X \subseteq \left(\bigcup \Acc_1 (q_1)\right) \cap
      \left(\bigcup \Acc_2 (q_2)\right)\}$;
    \item $\forall a \in \Sigma:$ $\delta ((q_1, q_2), a)$ is defined
      if and only if there exists $X \in \Acc (q_1, q_2)$ such that $a \in X$ and
      then $\delta ((q_1, q_2), a) = (\delta_1 (q_1, a), \delta_2 (q_2, a))$;
    \item $(q_1, q_2) \in F$ if and only if $q_1 \in F_1$ or $q_2
      \not\in F_2$.
  \end{compactitem}
\end{definition}

\begin{theorem}[Correctness]
  \label{thm:prequo:correct}
  Given two \masss{} $S_1$ and $S_2$ and an automaton $M \models S_1
  \prequotient S_2$, for any $M_2 \models S_2$ such that $M \times M_2$ is
  terminating, $M \times M_2 \models S_1$.
\end{theorem}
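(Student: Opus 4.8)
The plan is to build, from the two witnessing simulation relations, a single simulation relation establishing $M \times M_2 \models S_1$. Let $\pi \subseteq R \times (Q_1 \times Q_2)$ be the simulation relation witnessing $M \models S_1 \prequotient S_2$, and let $\pi_2 \subseteq R_2 \times Q_2$ witness $M_2 \models S_2$. I would first observe that, because $\Un(S_1 \prequotient S_2)$ is deterministic and follows the product transition structure, whenever $((r,r_2),(q_1,q_2)) \in \pi \times \pi_2$-style pairing arises along a common run, the $Q_2$-component reached via $\pi$ from $r$ and the $Q_2$-component reached via $\pi_2$ from $r_2$ coincide. Concretely I define
\[
  \Pi = \{\, ((r,r_2),\, q_1) \mid (r,(q_1,q_2)) \in \pi \text{ and } (r_2,q_2) \in \pi_2 \text{ for the (unique) such } q_2 \,\}.
\]
I then check that $(( r^0,r_2^0), q_1^0) \in \Pi$, which is immediate from $(r^0,(q_1^0,q_2^0)) \in \pi$ and $(r_2^0, q_2^0) \in \pi_2$.

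Next I would verify the three closure conditions of \autoref{def:sat} for $\Pi$ at an arbitrary pair $((r,r_2),q_1) \in \Pi$, writing $q_2$ for the matching $S_2$-state. For the transition-preservation clause: if $\lambda((r,r_2),a) = (r',r_2')$ is defined in $M \times M_2$, then $\lambda(r,a)=r'$ and $\lambda_2(r_2,a)=r_2'$ are both defined; applying $\pi$ gives $(r',\delta((q_1,q_2),a)) \in \pi$ with $\delta((q_1,q_2),a) = (\delta_1(q_1,a),\delta_2(q_2,a))$ by the second item of \autoref{def:prequo}, and applying $\pi_2$ gives $(r_2',\delta_2(q_2,a)) \in \pi_2$; hence $((r',r_2'),\delta_1(q_1,a)) \in \Pi$, and $\delta_1(q_1,a)$ is defined because $a \in \ready(r) \in \Acc(q_1,q_2) \subseteq 2^{\bigcup\Acc_1(q_1)}$, so some $X_1 \in \Acc_1(q_1)$ contains $a$, which by $\delta,\Acc$-consistency of $S_1$ forces $\delta_1(q_1,a)$ to be defined. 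For the marking clause: if $(r,r_2) \in G_1 \times G_2$, i.e. $r \in G$ and $r_2 \in G_2$, then from $M \models S_1 \prequotient S_2$ we get $(q_1,q_2) \in F$, i.e. $q_1 \in F_1$ or $q_2 \notin F_2$; but $r_2 \in G_2$ and $(r_2,q_2) \in \pi_2$ force $q_2 \in F_2$, so $q_1 \in F_1$, as required.

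The main obstacle is the acceptance clause: I must show $\ready(r) \cap \ready(r_2) \in \Acc_1(q_1)$, since the fireable actions of $(r,r_2)$ in the product are exactly $\ready(r) \cap \ready(r_2)$. From $(r,(q_1,q_2)) \in \pi$ we know $\ready(r) \in \Acc(q_1,q_2)$, so by the first item of \autoref{def:prequo}, $\ready(r) \cap X_2 \in \Acc_1(q_1)$ for every $X_2 \in \Acc_2(q_2)$; and from $(r_2,q_2) \in \pi_2$ we know $\ready(r_2) \in \Acc_2(q_2)$. Instantiating $X_2 := \ready(r_2)$ yields $\ready(r) \cap \ready(r_2) \in \Acc_1(q_1)$, which is precisely what is needed. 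The only subtlety to nail down carefully is that the set of fireable actions of $(r,r_2)$ in $M\times M_2$ equals $\ready(r)\cap\ready(r_2)$ — this is direct from the definition of product of automata — and that along any reachable pair the $q_2$ obtained through $\pi$ agrees with the one obtained through $\pi_2$, which follows by an easy induction on run length using determinism of $\Un(S_2)$ and of $\Un(S_1\prequotient S_2)$. Once these points are in place, $\Pi$ is a valid simulation relation and $M \times M_2 \models S_1$.
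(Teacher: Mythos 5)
Your proposal is correct and follows essentially the same route as the paper: combine the two witnessing relations into $\pi = \{((r,r_2),q_1) \mid \exists q_2:\ (r,(q_1,q_2)) \in \pi_{\prequotient} \text{ and } (r_2,q_2) \in \pi_2\}$ and check the three clauses of \autoref{def:sat}, with the acceptance clause discharged by instantiating $X_2 := \ready(r_2)$ exactly as in the paper. Your extra worry about the uniqueness of the matching $q_2$ (and the induction on run length) is unnecessary — the existence of a common $q_2$ is all the definition needs and it propagates automatically through the transition clause — but it does no harm; your spelled-out treatment of the marking clause ($r_2 \in G_2$ forces $q_2 \in F_2$, hence $q_1 \in F_1$) and of why $\delta_1(q_1,a)$ is defined is in fact slightly more explicit than the paper's.
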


\begin{proof}
  Let $\pi_{\prequotient}$ and $\pi_2$ be the simulation relations of $M \models
  S_1 \prequotient S_2$ and $M_2 \models S_2$. Let $\pi \subseteq ((R \times
  R_2) \times Q_1)$ be the simulation relation such that $((r, r_2), q_1) \in
  \pi$ if there exists a $q_2$ such that $(r_2, q_2) \in \pi_2$ and $(r, (q_1,
  q_2)) \in \pi_{\prequotient}$. For any $((r, r_2), q_1) \in \pi$:
  \begin{compactitem}
  	\item $\ready (r, r_2) \in \Acc_1 (q_1)$: by definition of the product of
      automata, we have: $\ready(r, r_2) = \ready(r) \cap \ready(r_2)$ and by definition
      of the acceptance set of the pre-quotient, this intersection is in the
      acceptance set of $q_1$.
  	\item $(r, r_2) \in G_1 \times G_2$ implies $q_1 \in F_1$: by definition of
      the pre-quotient, if $r \in G_1$, then $q_1 \in F_1$.
  	\item for any $a$, if $\lambda ((r, r_2), a) = (r', r'_2)$, then $\delta_1
      (q_1, a)$ is defined and moreover, $((r', r'_2), \delta_1 (q_1, a)) \in \pi$:
      $\lambda(r, a) = r'$, so $\delta((q_1, q_2), a)$ is defined and equal to
      some $(q_1', q_2')$ and, by definition of the pre-quotient, $\delta_1
      (q_1, a) = q_1'$ and $\delta_2 (q_2, a) = q_2'$; $(r, (q_1, q_2)) \in
      \pi_\prequotient$ so $(r', (q_1', q_2')) \in \pi_\prequotient$, $(r_2,
      q_2) \in \pi_2$, so $(r_2', q_2') \in \pi_2$, hence $((r', r_2'), q_1')
      \in \pi$. \qed
  \end{compactitem}
\end{proof}

The specification returned by the quotient is also expected to be
complete, ie., to characterize all the possible automata whose product
with a model of $S_2$ is a model of $S_1$. However, such a
specification may become very large as it will, in particular, have to
allow from a state $(q_1,q_2)$ all the transitions which are not
fireable from $q_2$ in $S_2$. As these transitions will always be
removed by the product with models of $S_2$, they serve no real
purpose for the quotient. We propose to return a compact specification
for the quotient, without these transitions which we then call
\emph{unnecessary}.

An automaton~$M$ is said to have no \emph{unnecessary transitions}
regarding a \mass{}~$S$, denoted $M \unncomp S$, if and only if there
exists a simulation relation $\pi \subseteq R \times Q$ such that
$(r^0, q^0) \in \pi$ and for all $(r, q) \in \pi$, $\ready (r)
\subseteq \bigcup \Acc (q)$ and for every $a$ and $r'$ such that
$\lambda (r, a) = r'$, $(r', \delta (q, a)) \in \pi$.

When an automaton~$M$ has unnecessary transitions regarding a
\mass{}~$S$, it is possible to remove these transitions. Let $\rho_u
(M, S)$ be the automaton $M' = (R \times Q, (r^0, q^0), \lambda', G
\times Q)$ with:
$$\lambda' ((r, q), a) = \left\{\begin{array}{l}
      (\lambda (r, a), \delta (q, a)) \mbox{ if } a \in \bigcup \Acc (q)\\
      \mbox{undefined otherwise}
    \end{array}\right.$$
\begin{theorem}
  \label{thm:unnec}
  Given an automaton $M$ and a \mass{} $S$, we have: $\rho_u (M, S)
  \unncomp S$. Moreover, for all $M_S \models S$, the automata $M
  \times M_S$ and $\rho_u (M, S) \times M_S$ are bisimilar.
\end{theorem}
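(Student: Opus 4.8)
The plan is to prove the two claims of \autoref{thm:unnec} separately, both by exhibiting an explicit simulation relation and checking the defining conditions.

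For the first claim, $\rho_u(M, S) \unncomp S$, I would take the relation $\pi = \{((r, q), q) \mid (r,q) \in R \times Q\}$, or more precisely its reachable restriction from $((r^0, q^0), q^0)$. The initial-state condition holds by construction. Fix a reachable pair $((r,q), q)$. The two things to verify are $\ready(r, q) \subseteq \bigcup \Acc(q)$ and stability under transitions. For the first, note that by definition of $\lambda'$, an action $a$ is fireable from $(r,q)$ in $\rho_u(M,S)$ only when $a \in \bigcup \Acc(q)$, so $\ready_{\rho_u(M,S)}(r,q) \subseteq \bigcup\Acc(q)$ immediately. For stability, if $\lambda'((r,q), a) = (\lambda(r,a), \delta(q,a))$ is defined, then $\delta(q,a)$ is defined (this uses $\delta,\Acc$-consistency of the normal form: $a \in \bigcup\Acc(q)$ implies $\delta(q,a)$ defined), and the target pair $((\lambda(r,a), \delta(q,a)), \delta(q,a))$ again has the shape required to be in $\pi$. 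So $\pi$ is a witnessing relation and the first claim follows.

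For the second claim, I would build an explicit bisimulation $\beta$ between $M \times M_S$ and $\rho_u(M,S) \times M_S$. Let $M_S = (R_S, r_S^0, \lambda_S, G_S)$ with simulation relation $\pi_S \subseteq R_S \times Q$ witnessing $M_S \models S$. The natural candidate is $\beta = \{\,((r, r_S),\ ((r,q), r_S)) \mid (r_S, q) \in \pi_S\,\}$, again restricted to reachable states. The initial pair is $((r^0, r_S^0), ((r^0, q^0), r_S^0))$ and $(r_S^0, q^0) \in \pi_S$, so it lies in $\beta$. For a pair $((r,r_S), ((r,q), r_S)) \in \beta$ I must check: equal fireable sets, matching marking, and that successors stay in $\beta$. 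The marking condition is where the product's marking conventions matter: in $M \times M_S$ the state is marked iff $r \in G$ and $r_S \in G_S$; in $\rho_u(M,S) \times M_S$ it is marked iff $(r,q) \in G \times Q$ (i.e.\ $r \in G$) and $r_S \in G_S$ — these coincide, so marking matches. For the fireable sets: an action $a$ is fireable from $(r, r_S)$ in $M \times M_S$ iff $a \in \ready(r) \cap \ready(r_S)$; it is fireable from $((r,q), r_S)$ iff $a \in \ready_{\rho_u(M,S)}(r,q) \cap \ready(r_S) = (\ready(r) \cap \bigcup\Acc(q)) \cap \ready(r_S)$. These agree precisely when $\ready(r) \cap \ready(r_S) \subseteq \bigcup\Acc(q)$, which is the crux.

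The main obstacle is exactly that last inclusion. The key observation is that, because $(r_S, q) \in \pi_S$ and $M_S \models S$, we have $\ready(r_S) \in \Acc(q)$, hence $\ready(r_S) \subseteq \bigcup\Acc(q)$; therefore $\ready(r) \cap \ready(r_S) \subseteq \ready(r_S) \subseteq \bigcup\Acc(q)$, which is what we need. (Note this uses the satisfaction hypothesis $M_S \models S$ in an essential way; the statement would be false for an arbitrary $M_S$.) With the fireable sets shown equal, the successor condition is routine: for a common fireable $a$, the successor of $(r,r_S)$ is $(\lambda(r,a), \lambda_S(r_S,a))$ and the successor of $((r,q),r_S)$ is $((\lambda(r,a), \delta(q,a)), \lambda_S(r_S,a))$, and since $(r_S,q) \in \pi_S$ gives $(\lambda_S(r_S,a), \delta(q,a)) \in \pi_S$, the new pair is again in $\beta$. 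Finally I would remark that $\beta$ is symmetric in the sense required (the conditions are stated symmetrically for a simulation relation witnessing bisimilarity), so $\beta$ is the desired bisimulation and the second claim follows.
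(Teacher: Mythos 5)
Your proposal is correct and follows essentially the same route as the paper's proof: the identity-shaped relation $((r,q),q)$ for the first claim, and the relation pairing $(r,r_S)$ with $((r,q),r_S)$ for $(r_S,q)\in\pi_S$ for the second, with the same key step that $\ready(r_S)\in\Acc(q)$ gives $\ready(r_S)\subseteq\bigcup\Acc(q)$ and hence equality of the fireable sets. You merely spell out a few details (marking, successor closure, $\delta,\Acc$-consistency) that the paper leaves implicit.
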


\begin{proof}
  $\rho_u(M, S) \unncomp S$: let $\pi$ be the simulation relation such that for
  any state $(r, q)$ of $\rho_u(M, S)$, $((r, q), q) \in \pi$; by definition of
  $\rho_u$, $\ready(r, q) \subseteq \bigcup \Acc (q)$.

  $M \times M_S$ and $\rho_u(M, S) \times M_S$ are bisimilar: let $\pi$ be the
  simulation relation such that for any state $(r, r_S)$ of $M \times M_S$ and
  $((r, q), r_S)$ of $\rho_u(M, S) \times M_S$, $((r, r_S), ((r, q), r_S))
  \in \pi$. Then, $\ready(((r, q), r_S)) = (\ready(r) \cap \bigcup \Acc(q)) \cap
  \ready(r_S)$. As $r_S$ implements $q$, $\ready(r_S) \subseteq \bigcup
  \Acc(q)$, so $\ready(((r, q), r_S)) = \ready(r) \cap \ready(r_S) = \ready((r,
  r_S))$. \qed
\end{proof}

  We can then prove that our pre-quotient is complete for automata
  without unnecessary transitions. Given an arbitrary automaton, it
  suffices to remove these transitions with $\rho_u$ before checking
  if it is a model of the quotient.

\begin{theorem}
  \label{thm:prequo:complete}
  Given two \masss{} $S_1$ and $S_2$ and an automaton $M$ such that
  $M\nobreak\unncomp\nobreak S_2$ and for all $M_2 \models S_2$ we have $M \times M_2
  \models S_1$, then $M \models S_1 \prequotient S_2$.
\end{theorem}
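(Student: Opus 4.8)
The goal is completeness: assuming $M \unncomp S_2$ and $M \times M_2 \models S_1$ for every $M_2 \models S_2$, I must exhibit a simulation relation witnessing $M \models S_1 \prequotient S_2$. The natural candidate is
\[
  \pi = \{(r, (q_1, q_2)) \mid \exists M_2 \models S_2,\ \exists r_2,\ (r_2, q_2) \in \pi_2 \text{ and } ((r, r_2), q_1) \in \pi_1\},
\]
where $\pi_2$ is the satisfaction relation for the chosen $M_2 \models S_2$ and $\pi_1$ the one for $M \times M_2 \models S_1$; and the $q_2$-component should be pinned down by having $\Un(S_2)$, being deterministic, read the same trace that leads $r$ to its state in $M$. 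Concretely, I would first fix a single "generic" model $M_2$ of $S_2$ — for instance a canonical model obtained from $\Un(S_2)$ by choosing a suitable ready set in $\Acc_2(q_2)$ at each reachable state, which exists since $S_2$ is in normal form (attracted, $\Acc$-consistent) so that every state of $S_2$ carries at least one nonempty-or-marked ready set and a terminating model can be built. Then $q_2 = \delta_2^*(q_2^0, w)$ for the trace $w$ reaching $r$, and $\pi$ is well-defined by determinism of $\Un(S_2)$.

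Next I would verify the three clauses of \autoref{def:prequo} for each $(r, (q_1, q_2)) \in \pi$. For the acceptance clause I must show $\ready(r) \in \Acc(q_1, q_2)$, i.e. that for every $X_2 \in \Acc_2(q_2)$ we have $\ready(r) \cap X_2 \in \Acc_1(q_1)$, and that $\ready(r) \subseteq \bigl(\bigcup \Acc_1(q_1)\bigr) \cap \bigl(\bigcup \Acc_2(q_2)\bigr)$. The second part is where $M \unncomp S_2$ is used: it gives $\ready(r) \subseteq \bigcup \Acc_2(q_2)$ directly, and $\ready(r) \subseteq \bigcup \Acc_1(q_1)$ follows because in $M \times M_2$ (for a model $M_2$ realizing the ready set $X_2 := \bigcup\Acc_2(q_2)$ if that set is itself in $\Acc_2(q_2)$, or more carefully, a union argument over models) the product ready set $\ready(r) \cap \ready(r_2) \in \Acc_1(q_1)$, and ready sets of $S_1$ are downward... no — rather, one argues each action of $\ready(r)$ survives in some product. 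For the first part, given $X_2 \in \Acc_2(q_2)$, I must produce a model $M_2 \models S_2$ whose state $r_2$ matching $q_2$ has $\ready(r_2) = X_2$; then $M \times M_2 \models S_1$ forces $\ready(r, r_2) = \ready(r) \cap X_2 \in \Acc_1(q_1)$. Building such an $M_2$ with a prescribed ready set at one designated reachable state, while remaining terminating, is the crux. The transition clause and the marking clause ($(q_1,q_2) \in F$ iff $q_1 \in F_1$ or $q_2 \notin F_2$) are then routine: a transition of $M$ on $a$ with $a \in \bigcup\Acc(q_1,q_2)$ is matched in the product and tracked by $\pi_1, \pi_2$; for marking, if $r \in G$ then in $M \times M_2$ picking $M_2$ whose matching $r_2$ is marked whenever $q_2 \in F_2$ forces $q_1 \in F_1$ unless $q_2 \notin F_2$.

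The main obstacle is the model-construction lemma: for any state $q_2$ of $S_2$ reachable by trace $w$ and any $X_2 \in \Acc_2(q_2)$, there is a terminating $M_2 \models S_2$ together with a state $r_2 = \lambda_2^*(r_2^0, w)$ with $\ready(r_2) = X_2$. One must choose $X_2$ at the target of $w$ and legitimate ready sets elsewhere so that (i) every state stays $\Acc$-consistent, (ii) no deadlock or livelock is introduced — which is exactly where "attracted" in the normal form is needed, since from every state a marked state is reachable in $\Un(S_2)$, so the ready sets can be chosen greedily along a shortest path to $F_2$ — and (iii) determinism is preserved, which is automatic. A subtlety is that the designated state might be visited by $\Un(S_2)$ via several prefixes or lie on a cycle, so the construction should really unfold $\Un(S_2)$ along $w$ (making a fresh copy of the state reached by $w$) and only then choose $X_2$ there; elsewhere it can fold back onto a fixed choice of model. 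Once this lemma is in hand, the three verifications above close the proof, and I would end with \qed.
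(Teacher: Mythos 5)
Your proposal is correct and follows essentially the same route as the paper's proof: define the candidate simulation relation by tracking $M$ against the reachable states of $S_1 \prequotient S_2$, use $M \unncomp S_2$ for $\ready(r) \subseteq \bigcup \Acc_2(q_2)$, and for each remaining clause (each $X_2 \in \Acc_2(q_2)$, the inclusion in $\bigcup \Acc_1(q_1)$, and the marking condition) invoke a witness model $M_2 \models S_2$ realizing the relevant ready set or marking at the state paired with $r$, so that $M \times M_2 \models S_1$ forces the constraint. The only difference is one of detail: the paper simply asserts the existence of these witness models, whereas you identify their construction (unfolding along the trace, then choosing ready sets toward marked states using the normal form) as the crux and sketch it explicitly.
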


\begin{proof}
Let $\pi$ be a simulation relation such that $(r^0, (q_1^0, q_2^0)) \in \pi$ and
for any $(r, (q_1, q_2)) \in \pi$, $a$ and $r'$ such that $\lambda(r, a) = r'$,
$(r', \delta ((q_1, q_2), a)) \in \pi$. This definition of $\pi$ is only correct
if for any $(r, (q_1, q_2)) \in \pi$ and $a$ such that $\lambda (r, a)$ is
defined, $\delta ((q_1, q_2), a) = (\delta_1 (q_1, a), \delta_2 (q_2, a))$ is also
defined. As $M \unncomp S_2$, $a \in \bigcup \Acc_2 (q_2)$, so there exists
an $X \in \Acc_2 (q_2)$ such that $a \in X$ and then $\delta_2 (q_2, a)$ is defined
(as $S_2$ is well-formed). As $\delta_2 (q_2, a)$ is defined, there exists an
automaton $M_2 \models S_2$ with a state $r_2$ implementing $q_2$ such that $(r,
r_2)$ is reachable in $M \times M_2$ and $\lambda_2 (r_2, a)$ is defined. Then,
$\lambda ((r, r_2), a)$ is defined and, as $M \times M_2 \models S_1$, it
implies that $\delta ((q_1, q_2), a)$ is defined.

There are then three points to prove for any $(r, (q_1, q_2)) \in \pi$:
\begin{compactitem}
  \item $\ready (r) \in \Acc ((q_1, q_2))$: by definition of the pre-quotient,
    $\ready (r)$ must verify two properties:
	\begin{compactitem}
		\item $\forall X_2 \in \Acc_2 (q_2): \ready (r) \cap X_2 \in \Acc_1 (q_1)$:

      Let $X_2$ be an element of $\Acc_2 (q_2)$. There exists an automaton $M_2$
      with a state $r_2$ such that $(r, r_2)$ is reachable in $M \times M_2$ and
      $\ready (r_2) = X_2$. Then, as $M \times M_2 \models S_1$ by a simulation
      relation $\pi_\times$ and $((r, r_2), q_1) \in \pi_\times$, $\ready (r)
      \cap \ready (r_2) = \ready (r) \cap X_2 \in \Acc_1 (q_1)$.
    \item $\ready (r) \subseteq \bigcup \Acc_1 (q_1) \cap \bigcup \Acc_2 (q_2)$:

      By definition of $\unncomp$, $\ready(r) \subseteq \bigcup \Acc_2 (q_2)$.

      Assume that $\ready(r) \not\subseteq \bigcup \Acc_1(q_1)$: there is an $a
      \in \ready(r)$ such that $a \not\in \bigcup \Acc_1(q_1)$. As $M$ has no
      unnecessary transition regarding $S_2$, there is a model $M_2$ of $S_2$
      with a state $r_2$ such that $(r, r_2)$ is reachable in $M \times M_2$ and
      $a \in \ready(r_2)$. Then, the transition $((r, r_2), a)$ is defined in $M
      \times M_2$. As $M \times M_2 \models S_1$, the transition $(q_1, a)$ has
      to be defined, which is in contradiction with the hypothesis that $a
      \not\in \bigcup \Acc_1 (q_1)$. Thus, $\ready(r) \subseteq \bigcup \Acc_1
      (q_1)$.
	\end{compactitem}
  \item $r \in G$ implies $(q_1, q_2) \in F_{\prequotient}$, that is $q_1 \in
    F_1$ or $q_2 \not\in F_2$:

    This property is only false if $r \in G$, $q_1 \not\in F_1$ and $q_2 \in
    F_2$. In this case, there exists an automaton $M_2 \models S_2$ with a state
    $r_2$ such that $(r, r_2)$ is reachable in $M \times M_2$ and $r_2 \in G_2$.
    Then, $M \times M_2 \models S_1$ by a simulation relation $\pi_\times$,
    $((r, r_2), q_1) \in \pi_\times$ and $(r, r_2)$ is marked. By definition of
    satisfaction, it implies that $q_1 \in F_1$, which is impossible as we
    already know that $q_1 \not\in F_1$. So $r \in G$ implies $(q_1, q_2) \in
    F_{\prequotient}$.
	\item for any $a$ and $r'$ such that $\lambda (r, a) = r'$, $(r', \delta
    ((q_1, q_2), a)) \in \pi$ is trivial by definition of $\pi$. \qed
\end{compactitem}
\end{proof}

\begin{corollary}[Completeness]
  \label{cor:prequo:complete}
  Given two \masss{} $S_1$ and $S_2$ and an automaton $M$ such that
  for all $M_2 \models S_2$ we have $M \times M_2 \models S_1$, then
  $\rho_u (M, S_2) \models S_1 \prequotient S_2$.
\end{corollary}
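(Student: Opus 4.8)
The plan is to reduce the statement to \autoref{thm:prequo:complete} by taking $N := \rho_u(M, S_2)$ as the candidate automaton and verifying that it meets the two hypotheses of that theorem, namely $N \unncomp S_2$ and $N \times M_2 \models S_1$ for every $M_2 \models S_2$. If both hold, \autoref{thm:prequo:complete} applied to $N$ yields $\rho_u(M, S_2) = N \models S_1 \prequotient S_2$, which is exactly the claim.

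The first hypothesis is immediate: it is the first conclusion of \autoref{thm:unnec}. For the second, fix an arbitrary $M_2 \models S_2$. By the second conclusion of \autoref{thm:unnec}, $N \times M_2$ and $M \times M_2$ are bisimilar, and by the hypothesis of the corollary, $M \times M_2 \models S_1$. So it remains to observe that satisfaction of a \mass{} is invariant under bisimulation of automata; granting that, $N \times M_2 \models S_1$ follows and the reduction is complete.

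The real content is therefore the auxiliary fact that if $M \models S$ and $M'$ is bisimilar to $M$, then $M' \models S$. This splits in two. First, $M'$ must be terminating: here one uses that an automaton is terminating precisely when every reachable state can reach a marked state (deadlock-freeness excludes reachable non-marked sinks, and livelock-freeness excludes reachable bottom regions avoiding $F$; conversely any reachable state unable to reach $F$ generates such a region), a property referring only to reachability and to membership in the marked set, hence transported along the bisimulation. Second, one obtains the satisfaction relation witnessing $M' \models S$ as the relational composition of the bisimulation on $R' \times R$ with the satisfaction relation $\pi \subseteq R \times Q$ of $M \models S$; checking the three clauses of \autoref{def:sat} for the composed relation is routine, since bisimilar states have equal $\ready$ sets and agree on being marked, which are exactly the two state-local conditions in \autoref{def:sat}.

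The main obstacle is this bisimulation-invariance lemma, and within it the termination part: the definition of livelock via $\Cycle$ is stated structurally rather than in terms of reachability, so one first has to establish its equivalence with ``every reachable state reaches a marked state'' before bisimulation-invariance becomes transparent. Everything else is a straightforward chaining of \autoref{thm:unnec} and \autoref{thm:prequo:complete}.
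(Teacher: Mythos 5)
Your proposal is correct and follows essentially the same route as the paper: apply \autoref{thm:unnec} to get $\rho_u(M,S_2) \unncomp S_2$ and the bisimilarity of $\rho_u(M,S_2) \times M_2$ with $M \times M_2$, transfer $\models S_1$ along the bisimulation, and conclude with \autoref{thm:prequo:complete}. The only difference is that you spell out the bisimulation-invariance of satisfaction (including preservation of termination), which the paper invokes implicitly without proof.
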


\begin{proof}
  By \autoref{thm:unnec}, we know that $\rho_u (M, S_2) \unncomp
  S_2$ and for any $M_2 \models S_2$, $\rho_u (M, S_2) \times M_2$ is bisimilar
  to $M \times M_2$, which implies that $\rho_u (M, S_2) \times M_2 \models
  S_1$. Then, \autoref{thm:prequo:complete} implies that $\rho_u (M, S_2)
  \models S_1 \prequotient S_2$. \qed
\end{proof}

Observe now that the pre-quotient $S_1 \prequotient S_2$ may admit some
models whose product with some models of $S_2$ may not be
terminating. Consider indeed the specifications $S_1$ and $S_2$ of
Fig.~\ref{fig:prequo:S1} and~\ref{fig:prequo:S2} and their pre-quotient
in \autoref{fig:prequo}. The product of the models $M_1^1$ of $S_1
\prequotient S_2$ (\autoref{fig:prequo:M11}) and $M'$ of $S_2$
(\autoref{fig:mod:M11}) has a livelock and thus is not
terminating. One may think that there is an error in the pre-quotient
computation and that it should not allow to realize only $\{a,c\}$,
without $b$ in $\Acc(0,0')$. Indeed, it would forbid the model $M_1^1$,
but it would also disallow some valid models such as $M_1^2$
(\autoref{fig:prequo:M12}), which realizes $\{a,c\}$ in a state and
$\{a,b\}$ in another, thus synchronizing on $b$ with any model of
$S_2$ and allowing the joint reachability of the marked states.

\begin{figure}
\centering
\subfigure[$S_2$]{
  \begin{tikzpicture}
    \node[state] (q0) {$0'$};
    \node[below=0pt of q0,align=center] {$\bigl\{\{a,b\},$\\$\{a,b,c\}\bigr\}$};
    \node[marked,right of=q0] (q2) {$1'$};
    \node[below=0pt of q2] {$\{\emptyset\}$};
    \draw[trans] ($ (q0) + (-.5, 0) $) to (q0);
    \draw[trans,min distance=20pt] (q0) to node[above] {$a$} (q0);
    \draw[trans,bend left] (q0) to node[above] {$b$} (q2);
    \draw[trans,bend right] (q0) to node[below] {$c$} (q2);
  \end{tikzpicture}
  \label{fig:prequo:S2}
}
\subfigure[$S_1 \prequotient S_2$]{
  \begin{tikzpicture}[state/.style={circle,draw,inner sep=1pt,node distance=50pt}]
    \node[state] (q0) {$0,0'$};
    \node[below=0pt of q0,align=center] {$\bigl\{\{a\},\{a,b\},$\\$\{a,c\}\bigr\}$};
    \node[marked,right of=q0] (q2) {$1,1'$};
    \node[below=0pt of q2] {$\{\emptyset\}$};
    \draw[trans] ($ (q0) + (-.6, 0) $) to (q0);
    \draw[trans,min distance=30pt] (q0) to node[above] {$a$} (q0);
    \draw[trans,bend left] (q0) to node[above] {$b$} (q2);
    \draw[trans,bend right] (q0) to node[above] {$c$} (q2);
  \end{tikzpicture}
  \label{fig:prequo}
}
\subfigure[$M_1^1 \models S_1 \prequotient S_2$]{
  \begin{tikzpicture}
    \node[state] (q0) {\textcolor{white}{$0$}};
    \node[marked,right of=q0,node distance=40pt] (q2) {\textcolor{white}{$2$}};
    \draw[trans] ($ (q0) + (-.5, 0) $) to (q0);
    \draw[trans,min distance=20pt] (q0) to node[above] {$a$} (q0);
    \draw[trans] (q0) to node[above] {$c$} (q2);
    \node[below=20pt of q0] {$\ $};
  \end{tikzpicture}
  \label{fig:prequo:M11}
}
\subfigure[$M_1^2 \models S_1 \prequotient S_2$]{
  \begin{tikzpicture}
    \node[state] (q0) {\textcolor{white}{$0$}};
    \node[state,right of=q0] (q1) {\textcolor{white}{$1$}};
    \node[marked] (q4) at ($ (q0)!.5!(q1) + (0, -1.5) $) {\textcolor{white}{$2$}};
    \draw[trans] ($ (q0) + (-.5, 0) $) to (q0);
    \draw[trans,bend left] (q0) to node[above] {$a$} (q1);
    \draw[trans,bend right] (q0) to node[below left] {$c$} (q4);
    \draw[trans,bend left] (q1) to node[below] {$a$} (q0);
    \draw[trans, bend left] (q1) to node[below right] {$b$} (q4);
  \end{tikzpicture}
  \label{fig:prequo:M12}
}
\caption{Example of pre-quotient}
\label{fig:prequotient}
\end{figure}
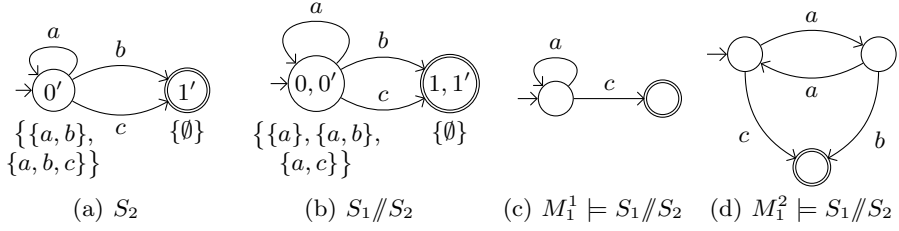

In the next section, we define a criterion allowing to test whether
the product of any models of two \masss{} is terminating or not. On
this basis, we will then refine the pre-quotient in
\autoref{sec:quotient} in order to guarantee the reachability
property.
\section{Compatible Reachability of MAS}
\label{sec:compreach}

By definition, the product of some models of two \masss{} may not
terminate due to two different causes, namely deadlock and livelock. We
consider separately the two issues to derive a compatible reachability
criterion on MAS.

\subsection{Deadlock-free specifications}\label{sec:deadlock}

In this section, we propose a test to check if two \masss{} $S_1$ and
$S_2$ have some models $M_1$ and $M_2$ such that $M_1 \times M_2$ has
a deadlock. To do so, we characterize deadlock-free pairs of states,
from which no deadlock may arise in the product of any two models of
$S_1$ and $S_2$.

Given two acceptance sets $A_1$ and $A_2$, let $\Compat(A_1, A_2)$ be
true iff for all $X_1 \in A_1$ and $X_2 \in A_2$, $X_1 \cap X_2 \neq
\emptyset$. Now a pair of states $(q_1, q_2)$ is said to be deadlock-free,
denoted $\Dead(q_1, q_2)$, if $\Acc_1(q_1) = \Acc_2(q_2) = \{\emptyset\}$ or
$\Compat(\Acc_1(q_1), \Acc_2(q_2))$.

\begin{definition}[Deadlock-free MAS]
  \label{def:deadlock-free}
  Two \masss{} $S_1$ and $S_2$ are \emph{deadlock-free} when all the reachable
  pairs of states in $\Un(S_1) \times \Un(S_2)$ are deadlock-free.
\end{definition}

\begin{theorem}
  \label{thm:deadlock-free}
  Two \masss{} $S_1$ and $S_2$ are deadlock-free if and only if for any
  $M_1\nobreak\models\nobreak S_1$ and $M_2 \models S_2$, $M_1 \times M_2$ is
  deadlock-free.
\end{theorem}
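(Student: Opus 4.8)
The plan is to prove both directions of the equivalence, exploiting the two characterizations already in hand: the semantic definition of satisfaction (Def.~\ref{def:sat}) and the combinatorial notion $\Compat$.

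For the easy direction ($\Leftarrow$ by contraposition), I would assume $S_1$ and $S_2$ are not deadlock-free, so there is a reachable pair $(q_1,q_2)$ in $\Un(S_1)\times\Un(S_2)$ with $\neg\Dead(q_1,q_2)$; that is, $(\Acc_1(q_1),\Acc_2(q_2))\ne(\{\emptyset\},\{\emptyset\})$ and there exist $X_1\in\Acc_1(q_1)$, $X_2\in\Acc_2(q_2)$ with $X_1\cap X_2=\emptyset$. I would then build witness models $M_1\models S_1$ and $M_2\models S_2$ that, after synchronizing along the word reaching $(q_1,q_2)$, land in a pair of states whose ready sets are exactly $X_1$ and $X_2$, so the product state has empty ready set. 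The subtlety is that this product state must be a genuine deadlock: it must not be marked, i.e.\ not both states are marked. If $X_1=\emptyset$ then (by $F,\Acc$-consistency) $q_1\in F_1$, but since $(\Acc_1,\Acc_2)\ne(\{\emptyset\},\{\emptyset\})$ we can choose the models so the $S_2$-side is not marked there (pick a non-$\emptyset$ set from $\Acc_2(q_2)$ if available, or note $\Acc_2(q_2)\ne\{\emptyset\}$); symmetrically if $X_2=\emptyset$; and if both are nonempty the product state is unmarked anyway. One then needs that models realizing a prescribed acceptance set at each reachable state always exist — this is folklore for acceptance specifications and follows from $\Acc$-consistency plus $\delta,\Acc$-consistency (normal form), so I would invoke it, perhaps after sketching the canonical unfolding.

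For the hard direction ($\Rightarrow$ by contraposition), I would assume $M_1\times M_2$ has a deadlock at some reachable $(r_1,r_2)$: $\ready(r_1)\cap\ready(r_2)=\emptyset$ and $(r_1,r_2)\notin G_1\times G_2$, say $r_1\notin G_1$. Let $\pi_1,\pi_2$ be the satisfaction relations and $(q_1,q_2)$ the pair with $(r_1,q_1)\in\pi_1$, $(r_2,q_2)\in\pi_2$; this pair is reachable in $\Un(S_1)\times\Un(S_2)$ because the simulating words coincide. Then $\ready(r_1)\in\Acc_1(q_1)$ and $\ready(r_2)\in\Acc_2(q_2)$ are two sets with empty intersection, so $\neg\Compat(\Acc_1(q_1),\Acc_2(q_2))$; it remains to rule out $\Acc_1(q_1)=\Acc_2(q_2)=\{\emptyset\}$. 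But that case would force $\ready(r_1)=\emptyset$ and, by $F,\Acc$-consistency, $q_1\in F_1$ — yet $r_1\notin G_1$ while $\pi_1$ only constrains the implication $r_1\in G_1\Rightarrow q_1\in F_1$, so this does not immediately contradict anything. Here is the genuine obstacle: a model may fail to be marked where its specification state is marked. I would resolve it by observing that $M_1$ is \emph{terminating}: a state $r_1$ with $\ready(r_1)=\emptyset$ and $r_1\notin G_1$ is itself a deadlock in $M_1$, contradicting $M_1\models S_1$ (models are terminating automata by Def.~\ref{def:sat}). Hence $\ready(r_1)\ne\emptyset$ when $r_1\notin G_1$, so $\Acc_1(q_1)\ne\{\emptyset\}$, and the $\{\emptyset\},\{\emptyset\}$ escape is blocked.

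The step I expect to be the main obstacle is the construction of the witness models in the ($\Leftarrow$) direction: one must simultaneously (i) make the underlying automata reach the right pair of states along a common word — which is automatic since $\Un(S_1),\Un(S_2)$ are deterministic and the word reaching $(q_1,q_2)$ is fireable in both — (ii) pin down the ready sets to be exactly $X_1$ and $X_2$ there, and (iii) keep the whole thing terminating (deadlock-free and livelock-free) so that it is a legitimate model, while still producing a deadlock in the product. Point (iii) is delicate because naïvely truncating the unfolding at $(q_1,q_2)$ could create livelocks elsewhere; the clean fix is to use the attractability and $\delta,\Acc$-consistency guaranteed by normal form to complete every other state with a marked-state-reaching continuation, and to handle marking of the deadlock pair by the case analysis on which of $X_1,X_2$ is empty sketched above. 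Once this model-existence lemma is in place, both directions are short.
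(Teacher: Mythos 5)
Your proposal is correct and follows essentially the same route as the paper: the forward direction extracts $X_1=\ready(r_1)\in\Acc_1(q_1)$ and $X_2=\ready(r_2)\in\Acc_2(q_2)$ from a product deadlock to refute $\Compat$, and the converse direction builds witness models realizing a disjoint pair $X_1,X_2$ at the reachable state pair, relying on the normal-form properties for model existence. If anything, your treatment of the marking corner cases is slightly more careful than the paper's (using termination of $M_1$ to exclude the $\{\emptyset\},\{\emptyset\}$ case, and the explicit case analysis on which of $X_1,X_2$ is empty), where the paper argues the same points more loosely.
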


\begin{proof}
  $(\Rightarrow)$ Suppose that $(r_1, r_2)$ is a deadlock in $M_1 \times M_2$.
  Then $(r_1,r_2)$ is not marked and $\ready((r_1,r_2)) = \emptyset$. Now
  $\ready((r_1,r_2)) = \ready(r_1) \cap \ready(r_2)$ and moreover, $(r_1, q_1)
  \in \pi_1$ and $(r_2, q_2) \in \pi_2$ implies $\ready(r_1) \in \Acc_1(q_1)$ and
  $\ready(r_2) \in \Acc_2(q_2)$. As a result, for $X_1 = \ready(r_1) \in \Acc_1
  (q_1)$, $X_2 = \ready(r_2) \in \Acc_2(q_2)$, we have: $X_1 \cap X_2 =
  \emptyset$ and thus $\neg \Compat (\Acc_1(q_1), \Acc_2(q_2))$. Moreover,
  $(r_1,r_2)$ is not marked so $(q_1,q_2)$ is not marked and $\emptyset \not\in
  \Acc_1(q_1)$ and $\emptyset \not\in \Acc_2(q_2)$. In consequence, we have
  $\neg\Dead(q_1,q_2)$ and $S_1$ and $S_2$ are not deadlock-free. \\

  $(\Leftarrow)$ Suppose that $S_1$ and $S_2$ are not deadlock-free: there
  exists $q_1$ and $q_2$ such that $\neg\Dead(q_1,q_2)$. Then there exists $X_1
  \in \Acc_1 (q_1)$ and $X_2 \in \Acc_2 (q_2)$ which verify $X_1 \cap X_2 =
  \emptyset$. For any $M_1 \models S_1$ and $M_2 \models S_2$ with $(r_1, q_1)
  \in \pi_1$ and $(r_2, q_2) \in \pi_2$ such that $\ready(r_1) = X_1$ and
  $\ready(r_2) = X_2$, we have $\ready((r_1,r_2)) = X_1 \cap X_2 = \emptyset$
  in $M_1 \times M_2$. Moreover, $\Acc_1(q_1) \neq \{\emptyset\}$ (or $\Acc_2(q_2)
  \neq \{\emptyset\}$), so there exists a model of $S_1$ (resp. $S_2)$ such that
  a state $r$ implementing $q_1$ (resp. $q_2$) is not marked and has at least
  one transition leading to another marked state, so $(r_1, r_2)$ is not marked.
  As a result, $(r_1,r_2)$ is a deadlock and $M_1 \times M_2$ is not
  deadlock-free. \qed
\end{proof}

\subsection{Livelock-free specifications}\label{sec:livelock}

In this section, we explain how we can check if two \masss{} $S_1$ and
$S_2$ have some models $M_1$ and $M_2$ such that $M_1 \times M_2$ has
a livelock. We identify the cycles shared between $S_1$ and $S_2$
along with the transitions leaving them. We check if at least one of
these transitions is preserved in the product of any two models of
$S_1$ and $S_2$. Before studying these common cycles, a first step
consists in unfolding $S_1$ and $S_2$ so as possible synchronizations
become unambiguous. 

\medskip

\textbf{Unfolding.} \ Given two specifications $S_1$ and $S_2$, we
define the \emph{partners} of a state $q_1$ as $Q_2 (q_1) = \{q_2 \mid
(q_1, q_2)\text{ is reachable in }\Un (S_1) \times \Un (S_2)\}$; the
set $Q_1 (q_2)$ is defined symmetrically. As a shorthand, if we know
that a state $q_1$ has exactly one partner, we will also use
$Q_2(q_1)$ to denote this partner.

If some states of $S_2$ have several partners, it is possible to
transform $S_2$ so that each of its states has at most one partner,
while preserving the set of models of the specification. The unfolding
of $S_2$ in relation to $S_1$ is the specification $((Q_1 \cup
\{q^?\}) \times Q_2, (q_1^0, q_2^0), \delta_u, \Acc_u, (Q_1 \cup
\{q^?\}) \times F_2)$ where:
\begin{compactitem}
  \item $q^?$ is a fresh state ($q_1^?$ denotes a state in $Q_1 \cup
    \{q^?\}$);
  \item $\delta_u ((q_1^?, q_2), a)$ is defined if and only if
    $\delta_2 (q_2, a)$ is defined and then:
    \[\begin{array}{ll}
      \delta_u ((q_1, q_2), a) = &
        \left\{\begin{array}{ll}
          (\delta_1 (q_1, a), \delta_2 (q_2, a)) & \text{if $\delta_1 (q_1, a)$ is defined} \\
          (q^?, \delta_2 (q_2, a)) & \text{otherwise}
        \end{array}\right. \\
      \delta_u ((q^?, q_2), a) = & (q^?, \delta_2 (q_2, a))
    \end{array}\]
  \item $\Acc_u ((q_1^?, q_2)) = \Acc_2 (q_2)$.
\end{compactitem}

Two \masss{} $S_1$ and $S_2$ have \emph{single partners} if and only
if for all $q_1 \in Q_1$, we have $|Q_2(q_1)| \leq 1$ and for all $q_2
\in Q_2$, we also have $|Q_1(q_2)| \leq 1$. 

Given two \masss{} $S_1$ and $S_2$, there exists some \masss{} $S_1'$
and $S_2'$, called \emph{unfoldings} of $S_1$ and $S_2$, with single
partners and which have the same models as $S_1$ and $S_2$. These two
\masss{} can be computed by unfolding $S_1$ in relation to $S_2$ and
then $S_2$ in relation to the unfolding of $S_1$.

\begin{lemma}
  \label{lem:unfolding-models}
  Given two \masss{} $S_1$ and $S_2$ and $S_u$ the unfolding of $S_2$ in
  relation to $S_1$, $S_u \equiv S_2$.
\end{lemma}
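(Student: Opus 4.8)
The plan is to exhibit an explicit correspondence between models of $S_2$ and models of $S_u$, working in both directions, and to conclude that the two sets of models coincide. Recall that $\Un(S_u)$ is deterministic: its states are pairs $(q_1^?, q_2)$, its transition structure faithfully tracks the second component (since $\delta_u((q_1^?,q_2),a)$ is defined exactly when $\delta_2(q_2,a)$ is), and its acceptance sets and marking are inherited from $S_2$ in the second component. The key structural observation, which I would establish first as an auxiliary claim, is that the map $(q_1^?, q_2) \mapsto q_2$ is a bisimulation between $\Un(S_u)$ and $\Un(S_2)$ in the sense of matching fireable sets, acceptance sets, and marking along reachable states. This is immediate from the definition of $\delta_u$, $\Acc_u$, and the marked set $(Q_1 \cup \{q^?\}) \times F_2$: every component that appears is literally copied from $S_2$.

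For the direction $M \models S_2 \Rightarrow M \models S_u$: given the simulation relation $\pi \subseteq R \times Q_2$ witnessing $M \models S_2$, I would define $\pi_u \subseteq R \times ((Q_1 \cup \{q^?\}) \times Q_2)$ to relate $r$ to exactly those reachable pairs $(q_1^?, q_2)$ of $S_u$ with $(r,q_2) \in \pi$, restricting to pairs actually reachable in $\Un(S_u)$ from $(q_1^0, q_2^0)$ along the word that led to $r$. Since $\Un(S_u)$ is deterministic, following a given trace fixes a unique state of $S_u$, so this is well-defined. The three conditions of \autoref{def:sat} — $\ready(r) \in \Acc_u(q_1^?,q_2)$, the marking implication, and closure under transitions — then transfer directly from the corresponding conditions for $\pi$ using the auxiliary claim, together with the fact that whenever $\lambda(r,a)$ is defined we have $a \in \ready(r) \in \Acc_2(q_2)$, hence $\delta_2(q_2,a)$ is defined (as $S_2$ is in normal form, by $\delta,\Acc$-consistency), hence $\delta_u((q_1^?,q_2),a)$ is defined.

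For the converse $M \models S_u \Rightarrow M \models S_2$: given a simulation relation $\pi_u$ for $M \models S_u$, I would take its projection onto the second component, $\pi = \{(r, q_2) \mid \exists q_1^?: (r,(q_1^?,q_2)) \in \pi_u\}$, and check it witnesses $M \models S_2$. Again each of the three conditions follows from the auxiliary claim: $\Acc_u(q_1^?,q_2) = \Acc_2(q_2)$ handles the acceptance condition, the marking sets correspond, and closure under transitions follows because $\delta_u((q_1^?,q_2),a) = (\_, \delta_2(q_2,a))$ whenever defined, and it is defined whenever $\delta_2(q_2,a)$ is. One should also note that both $S_2$ and $S_u$ have the same initial acceptance/marking data at their initial states, and that $M$ terminating is a hypothesis on $M$ alone, independent of which specification we compare it to.

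The main obstacle is not conceptual but bookkeeping: making the correspondence between relations precise in the presence of the fresh state $q^?$ and the branching of $q_1$-components. The cleanest route is to lean on determinism of $\Un(S_u)$ and $\Un(S_2)$ so that "the state reached by a trace" is unambiguous, and to phrase the auxiliary bisimulation claim once, so that all three satisfaction conditions in both directions reduce to a single line each. I do not expect any genuine difficulty with $q^?$ itself, since its acceptance set and outgoing transitions are those of the corresponding $q_2$, so it behaves exactly like the $S_2$-state it shadows.
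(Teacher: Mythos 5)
Your proposal is correct and takes essentially the same route as the paper's proof: both directions are handled by projecting a simulation relation for $S_u$ onto its second component, respectively lifting a simulation relation for $S_2$ to reachable pairs $(q_1^?, q_2)$, with all three satisfaction conditions transferring immediately because $\Acc_u$, the marking, and the second component of $\delta_u$ are copied verbatim from $S_2$. The extra restriction to pairs reached along the word leading to $r$ is harmless but unnecessary; the paper simply requires $(q_1^?, q_2)$ to be reachable in $S_u$.
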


\begin{proof}
  $(\Rightarrow)$ Let $M$ be a model of $S_u$. Let $\pi_u$ be the simulation
  relation between the states of $M$ and the states of $S_u$ and let $\pi_2$ be
  the simulation relation such that $(r, q_2) \in \pi_2$ if and only if there
  exists a $q_1^?$ such that $(r, (q_1^?, q_2)) \in \pi_u$. $(r^0, q_2^0) \in
  \pi_2$ and for any $(r, q_2) \in \pi_2$:
  \begin{compactitem}
    \item $\ready(r) \in \Acc_2(q_2)$ as $\ready(r) \in \Acc_u((q_1^?, q_2)) =
      \Acc_2(q_2)$;
    \item if $r \in G$, $q_2 \in F_2$ as $(q_1^?, q_2) \in (Q_1 \cup \{q^?\})
      \times F_2$;
    \item for any $a \in \ready(r)$, $(\lambda (r, a), \delta_2 (q_2, a)) \in
      \pi_2$ as $(\lambda (r, a), \delta_u ((q_1^?, q_2), a)) \in \pi_u$.
  \end{compactitem}
  Thus $M$ is a model of $S_2$. \\

  $(\Leftarrow)$ Let $M$ be a model of $S_2$. Let $\pi_2$ be the simulation
  relation between the states of $M$ and the states of $S_2$ and let $\pi_u$ be
  the simulation relation such that $(r, (q_1^?, q_2)) \in \pi_u$ if and only if
  $(r, q_2) \in \pi_2$ and $(q_1^?, q_2)$ is reachable in $S_u$. $(r^0, (q_1^0,
  q_2^0)) \in \pi_u$ and for any $(r, (q_1^?, q_2)) \in \pi_2$:
  \begin{compactitem}
    \item $\ready(r) \in \Acc_u((q_1^?, q_2))$ as $\ready(r) \in \Acc_2(q_2) =
      \Acc_u((q_1^?, q_2))$;
    \item if $r \in G$, $(q_1^?, q_2) \in (Q_1 \cup \{q^?\}) \times F_2$ as $q_2
      \in F_2$;
    \item for any $a \in \ready(r)$, $(\lambda (r, a), \delta_u ((q_1^?, q_2), a))
      \in \pi_u$ as $(\lambda (r, a), \delta_2 (q_2, a)) \in \pi_2$.
  \end{compactitem}
  Thus $M$ is a model of $S_u$. \qed
\end{proof}

\begin{lemma}
  \label{lem:unfolding-prod}
  Given two \masss{} $S_1$ and $S_2$ and $S_u$ the unfolding of $S_2$ in
  relation to $S_1$, for any $(q_1, (q_1^?, q_2))$ reachable in $\Un (S_1)
  \times \Un (S_2)$, $q_1 = q_1^?$.
\end{lemma}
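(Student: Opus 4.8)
The statement asserts that when we form the product $\Un(S_1) \times \Un(S_u)$, where $S_u$ is the unfolding of $S_2$ relative to $S_1$, the component tracking "where we are in $S_1$" always agrees between the two sides: if $(q_1, (q_1^?, q_2))$ is reachable, then $q_1 = q_1^?$. Since $\Un(S_u)$ has state set $(Q_1 \cup \{q^?\}) \times Q_2$, I read this as saying the $S_1$-coordinate of $S_u$ faithfully shadows the actual $S_1$-state, \emph{as long as the run stays inside $Q_1$} (i.e.\ $q_1^? \neq q^?$); and if it ever reaches $q^?$, then correspondingly $\Un(S_1)$ has undefined transitions there, so the product run cannot have reached such a pair. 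In fact the cleanest reading is: reachable pairs of the product have the form $(q_1, (q_1, q_2))$, and the state $q^?$ is never reached in the product at all.

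The plan is a straightforward induction on the length of a path in $\Un(S_1) \times \Un(S_u)$ reaching the pair $(q_1, (q_1^?, q_2))$.

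First I would set up the base case: the initial state is $(q_1^0, (q_1^0, q_2^0))$, so the $S_1$-coordinates coincide, $q_1 = q_1^? = q_1^0$.

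For the inductive step, suppose $(q_1, (q_1, q_2))$ is reachable (using the induction hypothesis that the coordinates already agree and equal some genuine $S_1$-state $q_1$, in particular $q_1 \neq q^?$), and suppose the product takes a transition on action $a$ to $(q_1', (q_1'^?, q_2'))$. A transition in the product requires both $\delta_1(q_1, a)$ and $\delta_u((q_1, q_2), a)$ to be defined; so $q_1' = \delta_1(q_1, a)$ is defined. Now I invoke the definition of $\delta_u$: since $\delta_1(q_1, a)$ is defined, we are in the first branch of the case split in the definition of $\delta_u((q_1, q_2), a)$, giving $\delta_u((q_1, q_2), a) = (\delta_1(q_1, a), \delta_2(q_2, a))$. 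Hence $q_1'^? = \delta_1(q_1, a) = q_1'$, which closes the induction. I should note explicitly that the "otherwise" branch producing $q^?$, and the clause $\delta_u((q^?, q_2), a) = (q^?, \delta_2(q_2,a))$, are never triggered along a run that also moves in $\Un(S_1)$, precisely because a product transition forces $\delta_1(q_1,a)$ to be defined.

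I do not expect any real obstacle here; the only point requiring a little care is making the induction hypothesis strong enough — it must carry not just "$q_1 = q_1^?$" but also "this common value is an actual state of $S_1$, not $q^?$" — so that the case analysis in the definition of $\delta_u$ can be resolved on the correct (first) branch. Everything else is bookkeeping against \autoref{def:prequo}-style product definitions and the displayed definition of $\delta_u$ just above the lemma.
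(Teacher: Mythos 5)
Your proof is correct and follows essentially the same route as the paper: induction on the path in $\Un(S_1)\times\Un(S_u)$, with the base case $(q_1^0,(q_1^0,q_2^0))$ and the inductive step resolved by noting that a product transition forces $\delta_1(q_1,a)$ to be defined, hence the first branch of $\delta_u$ applies. Your extra care about the hypothesis excluding $q^?$ is automatic anyway, since the first product component always lies in $Q_1$ and so $q_1 = q_1^?$ already rules out $q^?$.
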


\begin{proof}
If a state is reachable in $\Un (S_1) \times \Un (S_2)$, there is a path from
the initial state to it. By induction on this path:
\begin{itemize}
  \item if it is empty, we are in the initial state $(q_1^0, (q_1^0, q_2^0))$;
  \item otherwise, we are in a state $(q_1, (q_1, q_2))$ and there is
    a transition by an action $a$ to another state $(\delta_1 (q_1,
    a), \delta_u ((q_1, q_2), a))$. As $\delta_1 (q_1, a)$ is defined,
    $\delta_u ((q_1, q_2), a) = (\delta_1 (q_1, a), \delta_2 (q_2,
    a))$, so the destination state is the pair $(\delta_1 (q_1, a),
    (\delta_1 (q_1, a), \delta_2 (q_2, a)))$. \qed
\end{itemize}
\end{proof}

\begin{lemma}
  \label{lem:unfolding-Qi}
  Given two \masss{} $S_1$ and $S_2$ and $S_u$ the unfolding of $S_2$ in
  relation to $S_1$, for any state $q_u$ of $S_u$, $|Q_1 (q_u)| \leq 1$.
\end{lemma}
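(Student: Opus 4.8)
The plan is to derive this as an almost immediate consequence of \autoref{lem:unfolding-prod}. Recall that every state $q_u$ of $S_u$ has the form $(q_1^?, q_2)$ with $q_1^? \in Q_1 \cup \{q^?\}$ and $q_2 \in Q_2$, and that by definition $Q_1(q_u)$ is the set of states $q_1 \in Q_1$ such that the pair $(q_1, q_u)$ is reachable in $\Un(S_1) \times \Un(S_u)$.

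First I would fix an arbitrary $q_1 \in Q_1(q_u)$, so that $(q_1, (q_1^?, q_2))$ is reachable in $\Un(S_1) \times \Un(S_u)$. Applying \autoref{lem:unfolding-prod} to this reachable pair yields $q_1 = q_1^?$. Hence $Q_1(q_u) \subseteq \{q_1^?\} \cap Q_1$, independently of the choice of $q_1$.

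Then I would conclude by a case distinction on $q_1^?$. If $q_1^? = q^?$, the fresh state is not a state of $S_1$, so $\{q_1^?\} \cap Q_1 = \emptyset$ and therefore $Q_1(q_u) = \emptyset$. Otherwise $q_1^? \in Q_1$ and $Q_1(q_u) \subseteq \{q_1^?\}$ is a singleton at most. In both cases $|Q_1(q_u)| \leq 1$, which is the claim.

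I expect no real obstacle here: the substance has already been carried out in \autoref{lem:unfolding-prod}, which says precisely that the $S_1$-component and the $S_1$-part of the $S_u$-component of a reachable pair stay synchronized. The only points requiring a little care are bookkeeping ones — distinguishing the outer pair $(q_1, q_u)$ in the product automaton from the inner pair $(q_1^?, q_2)$ that constitutes $q_u$, and handling the degenerate case where the $S_u$-state carries the fresh symbol $q^?$ instead of a genuine state of $S_1$.
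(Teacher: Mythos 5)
Your proof is correct and follows essentially the same route as the paper: both derive the claim directly from \autoref{lem:unfolding-prod}, which forces any $q_1$ paired with $q_u = (q_1^?, q_2)$ to coincide with $q_1^?$. The paper phrases it as a contradiction (two distinct partners would both equal $q_1^?$), while you argue directly that $Q_1(q_u) \subseteq \{q_1^?\} \cap Q_1$ and additionally spell out the $q^? $ case, but this is only a cosmetic difference.
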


\begin{proof}
  Let suppose that $|Q_1 (q_u)| > 1$. Then, there exists at least two
  different states $q_1$ and $q'_1$ such that $(q_1, q_u)$ and $(q'_1,
  q_u)$ are reachable in $\Un (S_1) \times \Un (S_2)$. By the
  definition of the unfolding operation, there exists some $q_1^? \in
  Q_1 \cup \{q_?\}$ and $q_2 \in Q_2$ such that $q_u = (q_1^?, q_2)$.
  By \autoref{lem:unfolding-prod}, $q_1 = q_1^?$ and $q'_1 = q_1^?$,
  so $q_1 = q'_1$. But we know by hypothesis that they are different,
  so $|Q_1 (q_u)| \leq 1$. \qed
\end{proof}

Two \masss{} $S_1$ and $S_2$ have \emph{single partners} if and only
if for all $q_1 \in Q_1$, we have $|Q_2(q_1)| \leq 1$ and for all $q_2
\in Q_2$, we also have $|Q_1(q_2)| \leq 1$. 

\begin{theorem}
  \label{thm:unfoldings}
  Given two \masss{} $S_1$ and $S_2$, there exists some \masss{}
  $S_1'$ and $S_2'$, called \emph{unfoldings} of $S_1$ and $S_2$, with
  single partners and which are equivalent to $S_1$ and $S_2$.
\end{theorem}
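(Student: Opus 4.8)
The plan is to obtain $S_1'$ and $S_2'$ by applying the unfolding construction twice, in sequence, as the paragraph preceding the statement suggests: first unfold $S_1$ in relation to $S_2$ to obtain $S_1'$, then unfold $S_2$ in relation to $S_1'$ to obtain $S_2'$. The three preceding lemmas do almost all of the work, so the proof is essentially a matter of assembling them in the right order and checking that the second unfolding does not destroy the property established by the first.

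First I would set $S_1' $ to be the unfolding of $S_1$ in relation to $S_2$ and $S_2'$ to be the unfolding of $S_2$ in relation to $S_1'$. By \autoref{lem:unfolding-models} (applied with the roles of the two specifications swapped, which is legitimate since the lemma is symmetric in name only --- it is stated for ``the unfolding of $S_2$ in relation to $S_1$''), $S_1' \equiv S_1$; then applying the same lemma again, $S_2' \equiv S_2$. So the models are preserved, which is one of the two claims. For the single-partners property, \autoref{lem:unfolding-Qi} gives directly that every state $q_2'$ of $S_2'$ satisfies $|Q_1(q_2')| \leq 1$, where here $Q_1$ is taken with respect to $S_1'$. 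It remains to show the symmetric bound $|Q_2'(q_1')| \leq 1$ for every state $q_1'$ of $S_1'$, again with partners computed in $\Un(S_1') \times \Un(S_2')$.

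The key observation for this last point is that $Q_2$ is monotone in the following sense: replacing $S_2$ by an equivalent specification obtained by unfolding in relation to $S_1'$ can only \emph{split} the partners of a state of $S_1'$, never merge them. More precisely, I would argue that the product $\Un(S_1') \times \Un(S_2')$ simulates $\Un(S_1') \times \Un(S_2)$ in a way that respects the first component, so each reachable pair $(q_1', q_2')$ in the former projects to a reachable pair $(q_1', q_2)$ in the latter; conversely, by the reachability-tracking definition of $\delta_u$, each reachable $(q_1', q_2)$ lifts uniquely. Combining this with the fact that $S_1'$ already has $|Q_2(q_1')| \leq 1$ computed against $S_2$ --- which is exactly \autoref{lem:unfolding-Qi} applied to the first unfolding, stated with $Q_2$ in place of $Q_1$ --- gives $|Q_2'(q_1')| \leq 1$ against $S_2'$.

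The main obstacle I anticipate is bookkeeping rather than mathematical depth: one must be careful that \autoref{lem:unfolding-Qi} as stated bounds $|Q_1(q_u)|$ for states $q_u$ of the \emph{unfolded} specification, so to conclude $|Q_2'(q_1')| \leq 1$ one really needs the mirror-image statement ``unfolding $S_1$ in relation to $S_2$ makes every state of the result have at most one partner in $S_2$'', together with a short argument (via \autoref{lem:unfolding-prod}) that the second unfolding preserves this by not altering the first component's reachable partners. Since all three lemmas are symmetric under renaming of the two specifications, this mirror-image version is immediate, and the composition argument is a routine induction on paths in the double product. I would therefore present the proof as: define $S_1'$, $S_2'$ by the two-step construction; invoke \autoref{lem:unfolding-models} twice for equivalence; invoke \autoref{lem:unfolding-Qi} (mirrored) for the first unfolding and directly for the second; and close with the short monotonicity argument showing the second unfolding does not reintroduce multiple partners on the $S_1'$ side.
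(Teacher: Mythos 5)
Your proposal is correct and matches the paper's proof in all essentials: the same two-step construction ($S_1'$ unfolded against $S_2$, then $S_2'$ against $S_1'$), \autoref{lem:unfolding-models} invoked twice for equivalence, \autoref{lem:unfolding-Qi} (and its mirror) for the easy partner bounds, and a residual argument that $|Q_2'(q_1')| \leq 1$. The paper closes that last gap by counting the candidate partners of the form $(q_{1_i}^?, q_2)$ and using $|Q_1'(q_{2_i}')| \leq 1$, while you argue via projection to $\Un(S_1')\times\Un(S_2)$ and unique lifting through \autoref{lem:unfolding-prod}; these are the same underlying argument in slightly different clothing.
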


\begin{proof}
  Let $S_1'$ be the unfolding of $S_1$ in relation to $S_2$ and $S_2'$ the
  unfolding of $S_2$ in relation to $S_1'$.

  By \autoref{lem:unfolding-models}, we know that $S_1'$ has the same models as
  $S_1$ and $S_2'$ as $S_2$.

  By \autoref{lem:unfolding-Qi}, we know that for any $q_1'$ in $S_1'$,
  $|Q_2(q_1')| \leq 1$ and that for any $q_2'$ in $S_2'$, $|Q_1'(q_2')| \leq 1$.
  Remains to prove that $|Q_2'(q_1')| \leq 1$.

  Let $q_1'$ be a state of $S_1'$. If $|Q_2(q_1')| = 0$, then $|Q_2'(q_1')| = 0$
  as $S_2$ and $S_2'$ have the same models. Otherwise, there exists a $q_2$ such
  that $Q_2(q_1') = \{q_2\}$. There exists then $n$ states (with $n > 0$)
  $q_{2_i}'$ of the form $(q_{1_i}^?, q_2)$. But each $q_{2_i}'$ is in
  relation with at most one state ($q_{1_i}'$) of $S_1'$, as $|Q_1'(q_{2_i}')|
  \leq 1$, and all these $q_{1_i}'$ are different (as the $q_{2_i}'$ are
  different). So there is at most one $q_{2_i}'$ in relation with $q_1'$ and
  thus $|Q_2'(q_1')| \leq 1$. \qed
\end{proof}

\medskip

\textbf{Cycles.} \ In order to detect livelocks, we need to study the
cycles that may be present in the models of a
specification. Intuitively, a cycle is characterized by its states and
the transitions between them.

Given a \mass{} $S$, the partial map $C: Q \rightharpoonup 2^\Sigma$
represents a \emph{cycle} in S if and only if for any $q \in
\dom(C)$,
\begin{inparaenum}[(a)]
  \item $C(q) \neq \emptyset$,
  \item $\exists X \in \Acc(q)$ such that $C(q) \subseteq X$,
  \item $\dom(C) \subseteq \post(q)$ and
  \item $\forall a \in C(q): \delta(q, a) \in \dom(C)$.
\end{inparaenum}

A model $M$ of a MAS $S$ \emph{implements} a cycle $C$ if and only
if there exists a set $\mathcal R$ of states of $M$ such that each $q \in
\dom(C)$ is implemented by at least one state of $\mathcal R$ and for each $r
\in \mathcal R$ and for each $q$ such that $(r, q) \in \pi$,
\begin{inparaenum}[(a)]
  \item $q \in \dom(C)$,
  \item $C(q) \subseteq \ready(r)$,
  \item $\forall a \in C(q): \lambda(r, a) \in \mathcal R$ and
  \item $\forall a \in \ready(r) \backslash C(q): \lambda(r, a) \not\in
    \mathcal R$.
\end{inparaenum}
A cycle is said to be \emph{implementable} if there exists a model $M$ of $S$
implementing the cycle.

We define in \autoref{alg:impl-cycles-rec} an operation, $\CycleI$-rec, which
computes the cycles of a \mass{} passing by a given state. 

\begin{theorem}
  Given a \mass{} $S$, a model $M$ of $S$ implements a cycle $\mathcal C$ if and
  only if $\mathcal C \in \CycleI(S)$.
\end{theorem}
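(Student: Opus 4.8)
The plan is to prove the two implications separately, which amounts to showing that $\CycleI$-rec is \emph{complete} (every implementable cycle is produced by a run from one of its states) and \emph{sound} (every cycle it produces is implementable); recall that $\CycleI(S)$ collects the outputs of $\CycleI$-rec over all starting states.

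For the ``only if'' direction I would start from a model $M$ of $S$ implementing $\mathcal{C}$ with witness set $\mathcal{R}$, and first check that $\mathcal{C}$ is genuinely a cycle in the sense of the preceding definition. Condition~(a) is part of the implementing condition; for~(b), if $q \in \dom(\mathcal{C})$ is implemented by $r \in \mathcal{R}$ then $\ready(r) \in \Acc(q)$ because $M \models S$, and $\mathcal{C}(q) \subseteq \ready(r)$ by the implementing condition, so $\ready(r)$ witnesses~(b); conditions~(c) and~(d) follow from the fact that a $\mathcal{C}(q)$-labelled transition out of an implementing state stays inside $\mathcal{R}$, whose members only implement states of $\dom(\mathcal{C})$, using determinism of $\Un(S)$ to make the successor state unambiguous. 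Then I would run $\CycleI$-rec from some $q^{\star} \in \dom(\mathcal{C})$ and argue, by induction on the recursion, that the branch of the search which at each visited state $q$ selects the action set $\mathcal{C}(q)$ never fails a guard: the guards test exactly the closure and acceptance properties that make $\mathcal{C}$ a cycle, all of which we have just seen are witnessed by $M$. In particular the guard ruling out an unavoidable livelock is satisfied because $M$ is terminating, so either some state of $\dom(\mathcal{C})$ is marked or some implementing state has a ready transition outside $\mathcal{C}(q)$ leaving $\mathcal{R}$ — in which case, since $\ready(r)\in\Acc(q)$, that escape is visible to the guard. Hence the run returns $\mathcal{C}$, so $\mathcal{C}\in\CycleI(S)$.

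For the ``if'' direction I would take $\mathcal{C}\in\CycleI(S)$, read off from the guards of a successful run of $\CycleI$-rec that $\mathcal{C}$ is a cycle and that at least one state $q\in\dom(\mathcal{C})$ is marked or admits an $X\in\Acc(q)$ with $\mathcal{C}(q)\subsetneq X$, and then build an explicit model. The backbone has one state $r_q$ per $q\in\dom(\mathcal{C})$, with $\ready(r_q)=X_q$ for a chosen $X_q\in\Acc(q)$ such that $\mathcal{C}(q)\subseteq X_q$ (picked so that $X_q\supsetneq\mathcal{C}(q)$ for at least one $q$ when no backbone state is marked), and $r_q\in G$ iff $q\in F$; each $a\in\mathcal{C}(q)$ leads to $r_{\delta(q,a)}$, which is well defined since $\delta(q,a)$ is defined by $\delta,\Acc$-consistency and $\delta(q,a)\in\dom(\mathcal{C})$ by cycle condition~(d); each remaining $a\in X_q\setminus\mathcal{C}(q)$ leads to a fresh copy of a terminating canonical model of $S$ rooted at $\delta(q,a)$, which exists because $S$ is in normal form (every state is attracted and $\Acc$-consistent). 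I would then check that this automaton is deadlock-free and livelock-free — the backbone is not a livelock precisely because of the marked backbone state or the escape transition, and the attached sub-models are terminating by construction — and that mapping $r_q$ to $q$ and each off-backbone state to the state of $S$ it was built from is a satisfaction relation, so $M\models S$; taking $\mathcal{R}=\{r_q\mid q\in\dom(\mathcal{C})\}$ then witnesses that $M$ implements $\mathcal{C}$, conditions~(a)--(d) being immediate from the construction.

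The step I expect to be the main obstacle is soundness, and inside it the requirement that the constructed automaton be a terminating model: one must be sure that the algorithm's guards really do exclude cycles all of whose states are non-marked and saturated (i.e.\ with $\mathcal{C}(q)$ maximal in $\Acc(q)$), since such a cycle is an unavoidable livelock and hence not implementable, and conversely that when an escape is available the fresh terminating sub-models can be glued in without breaking determinism of $\Un(M)$ or the simulation into $S$. The algorithm-tracking argument in the completeness direction is essentially bookkeeping, provided the induction invariant — ``every state visited on this branch lies in $\dom(\mathcal{C})$ and the partial cycle built so far agrees with $\mathcal{C}$ on visited states'' — is matched carefully to the recursion structure of $\CycleI$-rec.
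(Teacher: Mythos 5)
Your proof follows essentially the same route as the paper: the forward direction tracks the branch of $\CycleI$-rec that selects $\mathcal C(q)$ at each visited state (your extra argument that terminating-ness of $M$ yields the marked-state-or-escape condition --- which in the paper is the filter in the definition of $\CycleI(S)$ rather than a guard inside the recursion --- is a point the paper glosses over), and the backward direction builds the same witness model, a cycle backbone with terminating sub-automata attached at the escape transitions, justified by normal form. There is no essential divergence; if anything, your construction is spelled out in more detail than the paper's.
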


\begin{proof}
  $(\Rightarrow)$ Let $\mathcal C$ be a cycle in $S$ and $M$ a model of $S$
  implementing $\mathcal C$, with $\mathcal R$ the set of states of $M$
  implementing the states of $\mathcal C$. Let $r$ be an element of $\mathcal
  R$ and $q$ a state it implements. By definition, $\CycleI(S)$ contains the
  result of $\CycleI(S, q)$, which calls $\CycleI\text{-rec}(S, q, \emptyset)$.
  For an iteration of the loop at line \ref{alg:impl-cycles-rec:for2}, the
  variable $C$ will take the value of $\mathcal C(q)$ and it will be inserted in
  the generated cycle. The algorithm will then be called recursively on the
  successors of $q$ in the cycle, until $q$ is reached again, thus obtaining
  $\mathcal C$. \\

  $(\Leftarrow)$ Let $\mathcal C$ be a cycle returned by
  $\CycleI(S)$. It is possible to build an automaton $M$ implementing
  the states and transitions of $\mathcal C$. The problem is to make
  sure that this automaton is terminating, ie. that it is possible to
  reach a marked state from any implementation of a state of
  $\dom(\mathcal C)$. By definition of the cycle implementation
  relation, we know that $\dom(\mathcal C) \cap F \neq \emptyset$ or
  $\exists q \in \dom(\mathcal C): \exists X \in \Acc(q): \mathcal C
  (q) \subset X$. In the first case, there is a marked state in the
  loop, thus $M$ is terminating. In the second case, we know that
  there is a state $q$, implemented in $M$ by a state $r$, from which
  there is a transition by an action $a$ which leaves the cycle, that
  is, $\lambda(r, a)$ is not in the set of states implementing
  $\mathcal C$. There is thus no constraint on the transitions from
  $\lambda(r, a)$ and it will be possible to reach a marked state from
  it (provided that $S$ is well-formed).  So $M$ is terminating and in
  consequence is a model of $S$. \qed
\end{proof}

However, some of these cycles may not be implementable. For instance,
the cycle $\mathcal C = \{0 \mapsto \{a\}\}$ is not implementable in
the \mass{} depicted in \autoref{fig:unimpl-cycle}, as all the models
have to eventually realize the transition by $b$ to reach the marked
state and then are not allowed to simultaneously realize the
transition by $a$.  

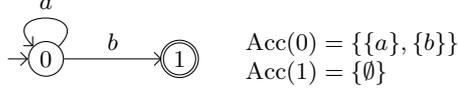
\begin{figure}
  \centering
  \begin{tikzpicture}
    \node[state] (q0) {$0$};
    \node[marked,right of=q0] (q1) {$1$};
    \draw[trans] ($ (q0) + (-.5, 0) $) to (q0);
    \draw[trans,min distance=20pt] (q0) to node[above] {$a$} (q0);
    \draw[trans] (q0) to node[above] {$b$} (q1);
    \node[right of=q1,anchor=west,node distance=20pt] {$\begin{array}{l}
        \Acc(0) = \{\{a\},\{b\}\}\\
        \Acc(1) = \{\emptyset\}
      \end{array}$};
  \end{tikzpicture}
  \caption{A \mass{} over \{a,b\} with no implementable cycle}
  \label{fig:unimpl-cycle}
\end{figure}

In order to be implementable, a cycle has to contain a marked state or it must
be possible to realize a transition that is not part of the cycle in addition to
the transitions of the cycle. Thus, the set of implementable cycles of a
\mass~$S$, denoted $\CycleI(S)$, is $\bigcup_{q \in Q} \{\mathcal C \in
\CycleI\text{-rec}(S, q, \emptyset) \mid \dom(\mathcal C) \cap F \neq \emptyset
\lor \exists q_\mathcal C \in \dom(\mathcal C): \exists X \in \Acc(q_\mathcal
C): \mathcal C (q_\mathcal C) \subset X\}$.

\begin{algorithm}
  \caption{$\CycleI$-rec ($S$: \mass, $q$: State, cycle: Cycle):
    Set Cycle}
\label{alg:impl-cycles-rec}
\begin{algorithmic}[1]
  \INLIF{$q \in \dom(\text{cycle})$}{\RET $\{$cycle$\}$}
  \STATE res $\leftarrow$ $\emptyset$
  \FORALL{$A \in \Acc(q)$}
    \STATE cycle\_acc $\leftarrow$ $\{a \mid a \in A \land q \in \post (\delta (q, a))\}$
      \label{alg:CycleI-rec:cycle-acc}
    \FORALL{$C \in 2^{\text{cycle\_acc}} \backslash \{\emptyset\}$} \label{alg:CycleI-rec:for}
      \label{alg:impl-cycles-rec:for2}
      \STATE current $\leftarrow$ $\{$cycle$\}$
      \FORALL{$a \in C$} \label{alg:CycleI-rec:for2}
        \STATE current $\leftarrow$ $\bigcup_{cycle \in \text{current}} \CycleI\text{-rec} (S, \delta (q, a), cycle \cup \{q \mapsto C\})$
          \label{alg:CycleI-rec:rec}
      \ENDFOR
      \STATE res $\leftarrow$ res $\cup$ current
    \ENDFOR
  \ENDFOR
  \RETURN res
\end{algorithmic}
\end{algorithm}

\textbf{Livelock-freeness.} \ We can now analyze the cycles of two
\masss{} with single partners in order to detect if there may be a
livelock in the product of some of their models. To do so, we
distinguish two kinds of transitions: those, denoted~$\mathcal A$,
which are always realized when the cycle is implemented and those,
denoted~$\mathcal O$, which may (or may not) be realized when the
cycle is implemented. These sets are represented by partial functions
from a state to a set of sets of actions and given, for a particular
cycle $\mathcal C$, by the following formulae:
\begin{gather*}
\mathcal A = \{q \mapsto \text{leaving}(q, A) \mid A \not\in \Acc(q),
(q, A) \in \mathcal C\} \\
\mathcal O = \{q \mapsto \text{leaving}(q, A) \mid A \in\Acc (q), (q,
A) \in \mathcal C \land \text{leaving}(q, A) \neq \emptyset\} \\
\text{where leaving}(q, A) = \{X \setminus A \mid X \in \Acc(q) \land A \subset
X\}
\end{gather*}

\begin{definition}
  \label{def:livelock-test}
  Given two \masss{} $S_1$ and $S_2$ with single partners and a cycle $\mathcal
  C_1$ in $S_1$ such that all its states have a partner, $\mathcal C_1$ is
  livelock-free in relation to $S_2$, denoted $\Live(\mathcal C_1, S_2)$, if and
  only if, when the cycle $\mathcal C_2 = \{Q_2(q) \mapsto \mathcal C_1(q) \mid
  q \in \dom(\mathcal C_1)\}$ is in $\CycleI(S_2)$:
  \begin{compactenum}
    \item\label{def:livelock-test:2} $\mathcal A_{\mathcal C_1} \neq \emptyset$,
      $\mathcal A_{\mathcal C_2} \neq \emptyset$ and there exists $q_1' \in
      \dom(\mathcal A_{\mathcal C_1})$ such that $Q_2(q_1') \in \dom(\mathcal
      A_{\mathcal C_2})$ and $\Compat(\mathcal A_{\mathcal C_1} (q_1'),
      \mathcal A_{\mathcal C_2} (Q_2(q_1')))$, or
    \item\label{def:livelock-test:3} $\mathcal A_{\mathcal C_1} \neq \emptyset$,
      $\mathcal A_{\mathcal C_2} = \emptyset$, $\dom(\mathcal C_2) \cap F_2 =
      \emptyset$ and $\forall q_2' \in \dom(\mathcal O_{\mathcal C_2}):$
      $Q_1(q_2') \in \dom(\mathcal A_{\mathcal C_1})$ and $\Compat(\mathcal
      A_{\mathcal C_1} (Q_1(q_2')), \mathcal O_{\mathcal C_2} (q_2'))$, or
    \item\label{def:livelock-test:4} $\mathcal A_{\mathcal C_1} = \emptyset$,
      $\mathcal A_{\mathcal C_2} \neq \emptyset$, $\dom(\mathcal C_1) \cap F_1 =
      \emptyset$ and $\forall q_1' \in \dom(\mathcal O_{\mathcal C_1}):$
      $Q_2(q_1') \in \dom(\mathcal A_{\mathcal C_2})$ and $\Compat(\mathcal
      O_{\mathcal C_1} (q_1'), \mathcal A_{\mathcal C_2} (Q_2(q_1')))$.
  \end{compactenum}
\end{definition}

\begin{definition}[Livelock-free specifications]
  \label{def:livelock-free}
  Two \masss{} $S_1$ and $S_2$ with single partners are \emph{livelock-free} if
  all the implementable cycles of $S_1$ are livelock-free in relation to $S_2$.
\end{definition}

This definition only tests the implementable cycles of $S_1$. It is not
necessary to do the symmetrical test (checking that the implementable cycles of
$S_2$ verify $\Live$) because we only compare the cycle of $S_1$ with the same
cycle in $S_2$ and the tests of \autoref{def:livelock-test} are symmetric.

The previous definition offers a necessary and sufficient condition to
identify \masss{} which can have two respective models whose product
has a livelock:

\begin{theorem}
  \label{thm:livelock-free}
  Two \masss{} $S_1$ and $S_2$ with single partners are livelock-free if and
  only if for any $M_1 \models S_1$ and $M_2 \models S_2$, $M_1 \times M_2$ is
  livelock-free.
\end{theorem}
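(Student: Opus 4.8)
The plan is to prove both directions by relating livelocks in a product automaton $M_1 \times M_2$ to implementable cycles of $S_1$ and their images in $S_2$, using that $S_1$ and $S_2$ have single partners so that each state of $M_1 \times M_2$ projects unambiguously onto a pair $(q_1, Q_2(q_1))$. The key technical fact I would establish first is that any livelock in $M_1 \times M_2$ arises from a cycle $\mathcal C_\times$ in the product whose projection onto $S_1$ is an implementable cycle $\mathcal C_1$ (with all states having a partner) and whose projection onto $S_2$ is exactly the cycle $\mathcal C_2 = \{Q_2(q) \mapsto \mathcal C_1(q) \mid q \in \dom(\mathcal C_1)\}$; conversely, data witnessing $\neg\Live(\mathcal C_1, S_2)$ for an implementable $\mathcal C_1$ can be assembled into concrete models $M_1, M_2$ whose product has a livelock.

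For the $(\Leftarrow)$ direction (contrapositive), I would assume $S_1$ and $S_2$ are not livelock-free, so there is an implementable cycle $\mathcal C_1$ of $S_1$, with all states having a partner, such that $\neg\Live(\mathcal C_1, S_2)$ and $\mathcal C_2 \in \CycleI(S_2)$. Since $\mathcal C_1$ is implementable, pick a model $M_1$ of $S_1$ implementing it via a state set $\mathcal R_1$ (using the theorem characterizing implementable cycles), and similarly $M_2$ implementing $\mathcal C_2$ via $\mathcal R_2$; because of single partners, the synchronization of these implementations along the shared cycle is unambiguous, giving a cycle in $M_1 \times M_2$ supported on $\mathcal R_1 \times \mathcal R_2$ whose marked states, by construction of $\mathcal C_1, \mathcal C_2$ being implementable, I must now show can be avoided. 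The heart of the argument is a case analysis mirroring the three clauses of \autoref{def:livelock-test}: negating all three clauses means that for every choice of extra (leaving) transitions realized on the $S_1$ side and on the $S_2$ side, either the required $\Compat$ condition fails — so the chosen leaving transitions do not synchronize in the product — or the only escape route is via a marked state that we have excluded. Concretely, when $\mathcal A_{\mathcal C_1} \neq \emptyset$ and $\mathcal A_{\mathcal C_2} \neq \emptyset$ but no partnered pair of states is $\Compat$ on its $\mathcal A$-sets, I choose $\ready$-sets in $M_1$ and $M_2$ along the cycle whose mandatory leaving actions are disjoint, so no leaving transition survives the product; when one of the $\mathcal A$'s is empty I instead use the $\dom(\mathcal C_i) \cap F_i = \emptyset$ hypothesis together with the $\mathcal O$-incompatibility to ensure that the optional leaving transitions on one side can always be made disjoint from the mandatory ones on the other. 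In every case the product cycle contains no marked state and has no surviving transition leaving it, i.e. it is a livelock.

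For the $(\Rightarrow)$ direction (again contrapositive), I assume $M_1 \models S_1$, $M_2 \models S_2$ and $(r_1, r_2)$ belongs to a livelock of $M_1 \times M_2$; let $\Cycle(r_1, r_2)$ be the livelock component. Projecting it to $M_1$ and to $M_2$ and then, via the simulation relations $\pi_1, \pi_2$, down to $S_1$ and $S_2$, I obtain maps $\mathcal C_1$ on the $S_1$-states visited and $\mathcal C_2$ on the $S_2$-states visited; single partners ensures $\mathcal C_2 = \{Q_2(q) \mapsto \mathcal C_1(q)\}$ and that these are genuine cycles in the sense of the cycle definition. Since $(r_1, r_2)$ is in a livelock, $\Cycle(r_1,r_2) \cap (G_1 \times G_2) = \emptyset$, forcing in particular that $\mathcal C_1$ is implementable via the $M_1$-projection (it is implemented, hence implementable) and likewise $\mathcal C_2 \in \CycleI(S_2)$. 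It remains to derive that $\neg\Live(\mathcal C_1, S_2)$: because the livelock has no transition leaving it in the product, for every state $(r_1, r_2)$ of the component, $\ready(r_1) \cap \ready(r_2)$ contains no action going outside the component, so the mandatory/optional leaving actions realized in $M_1$ and $M_2$ (which witness membership of $\ready(r_i)$ in the appropriate $\mathcal A_{\mathcal C_i}(q_i)$ or $\mathcal O_{\mathcal C_i}(q_i)$) must be pairwise disjoint on partnered states; reading off the contrapositive of each of the three clauses of \autoref{def:livelock-test} then shows the condition fails. The main obstacle I anticipate is the bookkeeping in the $(\Leftarrow)$ direction: correctly building $M_1$ and $M_2$ so that (i) they individually satisfy $S_1$ and $S_2$ — in particular remain terminating, which forces care about states outside the cycle and about the $\emptyset \in \Acc$ cases — and (ii) their shared cycle is exactly the livelock with no accidental escape transition created by synchronization, which is where the single-partners hypothesis and the precise definitions of $\text{leaving}$, $\mathcal A$ and $\mathcal O$ must be used carefully.
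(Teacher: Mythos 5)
Your plan follows essentially the same route as the paper's proof: both directions by contraposition, projecting a livelock of $M_1 \times M_2$ onto cycles $\mathcal C_1$ and $\mathcal C_2 = \{Q_2(q) \mapsto \mathcal C_1(q)\}$ for one direction, and assembling witness models whose product livelocks from the failure of the three clauses of \autoref{def:livelock-test} (with the same case split on the emptiness of $\mathcal A_{\mathcal C_1}$ and $\mathcal A_{\mathcal C_2}$) for the other. The only point where you are more optimistic than the paper is the claim that the livelock component projects directly to a genuine implemented cycle of $S_1$; this can fail (e.g.\ the same specification state may be visited with incompatible staying sets, or extra unsynchronized transitions of $M_1$ violate the implementation conditions), which is why the paper separately treats the case where several or slightly different cycles are implemented simultaneously and reduces it to models implementing a single cycle.
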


\begin{proof}
  $(\Rightarrow)$ Assume that there exists $M_1 \models S_1$, $M_2 \models S_2$
  such that $M_1 \times M_2$ has a livelock, that is, there exists $(r_1, r_2)$
  such that $\Cycle((r_1, r_2)) \neq \emptyset$, $\Cycle((r_1, r_2)) \cap G =
  \emptyset$ and there is no transition $(r', a, r'')$ such that $r' \in
  \Cycle((r_1, r_2))$ and $r'' \not\in \Cycle((r_1, r_2))$.
  \begin{compactitem}
    \item If there exists a cycle $\mathcal C_1 \in \CycleI(S_1)$ which is
      implemented in $M_1$ by the states of $\Cycle(r_1)$ and $\mathcal C_2 =
      \{Q_2(q) \mapsto \mathcal C_1(q) \mid q \in \dom(\mathcal C_1)\}$ is
      implemented in $M_2$ by the states of $\Cycle(r_2)$:
      \begin{compactitem}
        \item if there is no transition leaving $\Cycle(r_1)$, then $\mathcal
          A_{\mathcal C_1} = \emptyset$ and $\dom(\mathcal C_1) \cap F_1 \neq
          \emptyset$, so the three tests of \autoref{def:livelock-test} fail and
          $S_1$ and $S_2$ are not livelock-free; symmetrically $S_1$ and $S_2$
          are not livelock-free if there is no transition leaving $\Cycle(r_2)$;
        \item if there are transitions leaving $\Cycle(r_1)$ and $\Cycle(r_2)$,
          they are not compatible, ie. they have different actions or different
          source states. If in both models, some of these transitions are
          in $\mathcal A$ (they have to be present whenever the cycle is
          implemented), the test \ref{def:livelock-test:2} of
          \autoref{def:livelock-test} will detect that they are not compatible.
          If there are some transitions in $\mathcal A_{\mathcal C_1}$ but none
          in $\mathcal A_{\mathcal C_2}$, test \ref{def:livelock-test:3} will
          detect that $M_2$ may implement a transition that will not be covered
          by the transitions in $\mathcal A_{\mathcal C_1}$. Test
          \ref{def:livelock-test:4} handles the symmetrical case. Finally, if
          there are transitions neither in $\mathcal A_{\mathcal C_1}$ nor
          $\mathcal A_{\mathcal C_2}$, it is always possible to generate a
          livelock and all three tests fail.
      \end{compactitem}
    \item Otherwise, multiple cycles are implemented simultaneously in the model
      by unfolding them or two slightly different cycles are implemented in
      $M_1$ and $M_2$, and then there will also be a livelock in the models
      which implement only one of the cycles, which brings us back to the first
      case.
  \end{compactitem} 

\medskip

  $(\Leftarrow)$ Assume that $S_1$ and $S_2$ are not livelock-free. Then, there
  exists a cycle $\mathcal C_1$ such that $\neg\Live(\mathcal C_1, S_2)$. Then,
  the three conditions of \autoref{def:livelock-free} are all false.
  \begin{compactitem}
  \item If $\mathcal A_{\mathcal C_1} \neq \emptyset$ and $\mathcal
    A_{\mathcal C_2} \neq \emptyset$, then for any state $q_1' \in
    \dom(\mathcal A_{\mathcal C_1})$ in $S_1$, we have $\neg \Compat(
    \mathcal A_{\mathcal C_1} (q_1'), \mathcal A_{\mathcal C_2}
    (Q_2(q_1')))$. So there exists a model $M_1$ of $S_1$ implementing
    $\mathcal C_1$ and a model $M_2$ of $S_2$ implementing $\mathcal
    C_2$ such that there is no transition leaving the cycle in their
    product, hence there is a livelock in $M_1 \times M_2$.
    \item If $\mathcal A_{\mathcal C_1} \neq \emptyset$ and $\mathcal
      A_{\mathcal C_2} = \emptyset$, there exists a state $q_2' \in
      \dom(\mathcal O_{\mathcal C_2})$ such that we have $\neg
      \Compat(\mathcal A_{\mathcal C_1} (Q_1(q_2')), \mathcal
      O_{\mathcal C_2} (q_2'))$. So for any model $M_1$ of $S_1$
      implementing $\mathcal C_1$, its product with a model $M_2$ of
      $S_2$ implementing $\mathcal C_2$ for which the only transition
      leaving the cycle is from an implementation of $q_2'$ will have
      a livelock.
    \item If $\mathcal A_{\mathcal C_1} = \emptyset$ and $\mathcal A_{\mathcal
      C_2} \neq \emptyset$, we are in the case symmetric to the previous one.
    \item If $\mathcal A_{\mathcal C_1} = \emptyset$ and $\mathcal A_{\mathcal
      C_2} = \emptyset$, either one of the specifications has no transitions
      leaving the cycle ($\mathcal O_{\mathcal C_i} = \emptyset$ too), so there
      are some models such that their product has a livelock, or both $\mathcal
      O_{\mathcal C_1}$ and $\mathcal O_{\mathcal C_2}$ are not empty, and then
      there exists an $M_1 \models S_1$ implementing $\mathcal C_1$ such that
      the only transition(s) leaving the cycle is (are) from a state $r_1$ and
      an $M_2 \models S_2$ implementing $\mathcal C_2$ such that the only
      transition(s) leaving the cycle is (are) from a state $r_2$ which is never
      paired with $r_1$ in $M_1 \times M_2$, hence there is a livelock in $M_1
      \times M_2$. \qed
  \end{compactitem}
\end{proof}

\textbf{Specifications with compatible reachability.} \  By combining
the tests for deadlock-free and livelock-free specifications, we can
define a criterion checking if two \masss{} $S_1$ and $S_2$ have some
models $M_1$ and $M_2$ such that $M_1 \times M_2$ is not terminating.

\begin{definition}[Compatible reachability]\label{def:compat-reach}
  Two \masss{} $S_1$ and $S_2$ have a \emph{compatible reachability},
  denoted $S_1~\compreach~S_2$, if and only if they are deadlock-free
  and their unfoldings are livelock-free. They have an incompatible
  reachability otherwise.
\end{definition}

\begin{theorem}
  \label{thm:compreach}
  Given two \masss{} $S_1$ and $S_2$, $S_1 \compreach S_2$ if and only if for
  any $M_1 \models S_1$ and $M_2 \models S_2$, $M_1 \times M_2$ is terminating.
\end{theorem}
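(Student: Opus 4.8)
The goal is to combine the deadlock criterion (\autoref{thm:deadlock-free}) and the livelock criterion (\autoref{thm:livelock-free}) into a statement about termination, which by definition means deadlock-free \emph{and} livelock-free. The first thing I would do is observe that for the $(\Leftarrow)$ direction the argument is almost immediate: if for every $M_1 \models S_1$ and $M_2 \models S_2$ the product $M_1 \times M_2$ is terminating, then in particular every such product is deadlock-free and livelock-free; the former gives, by \autoref{thm:deadlock-free}, that $S_1$ and $S_2$ are deadlock-free, and the latter gives, by \autoref{thm:livelock-free} applied to the unfoldings $S_1'$ and $S_2'$, that these unfoldings are livelock-free. Here I would invoke \autoref{thm:unfoldings} and \autoref{lem:unfolding-models} to note that $S_1' \equiv S_1$ and $S_2' \equiv S_2$, so the set of products $M_1' \times M_2'$ for models of the unfoldings coincides (up to bisimilarity, hence preserving termination) with the set of products $M_1 \times M_2$; this is exactly what \autoref{thm:livelock-free} needs as input. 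Putting these together gives $S_1 \compreach S_2$ by \autoref{def:compat-reach}.

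For the $(\Rightarrow)$ direction I would argue contrapositively. Suppose some $M_1 \models S_1$, $M_2 \models S_2$ with $M_1 \times M_2$ not terminating. Then it has a deadlock or a livelock. If it has a deadlock, \autoref{thm:deadlock-free} directly yields that $S_1$ and $S_2$ are not deadlock-free, so $\neg(S_1 \compreach S_2)$. If it has a livelock, I would transfer the models to the unfoldings: by \autoref{thm:unfoldings} and \autoref{lem:unfolding-models}, $M_1$ is also a model of $S_1'$ and $M_2$ of $S_2'$, so $M_1 \times M_2$ (unchanged, or bisimilar) is a product of models of the unfoldings that is not livelock-free; \autoref{thm:livelock-free} then gives that $S_1'$ and $S_2'$ are not livelock-free, so again $\neg(S_1 \compreach S_2)$. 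Combining the two cases finishes the contrapositive.

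**Main obstacle.**
The only delicate point is the interface between the two sub-criteria, namely making sure the model sets line up correctly when passing to unfoldings. \autoref{thm:livelock-free} is stated for specifications \emph{with single partners}, whereas $\compreach$ is defined via the unfoldings; so I must be careful to apply \autoref{thm:unfoldings} to obtain $S_1', S_2'$ with single partners and equivalent to $S_1, S_2$, and to note that model equivalence of specifications is exactly what licenses swapping $M_i$ for a model of $S_i'$ (and that the products are bisimilar, so termination, deadlock-freeness and livelock-freeness are all preserved — this uses the elementary fact, implicit in the paper, that these properties are bisimulation-invariant). Once this bookkeeping is in place, the theorem is simply the conjunction of the two earlier equivalences, with no new combinatorial content.
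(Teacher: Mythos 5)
Your proposal is correct and follows essentially the same route as the paper: the paper's proof is exactly the combination of \autoref{thm:deadlock-free}, \autoref{thm:unfoldings} and \autoref{thm:livelock-free} through \autoref{def:compat-reach}, which you have simply spelled out direction by direction. Your extra care about transferring models to the unfoldings is fine (indeed, by \autoref{lem:unfolding-models} the model sets coincide exactly, so the products are literally the same and no bisimulation argument is even needed), but it adds no new content beyond the paper's argument.
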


\begin{proof}
  By definition, $S_1 \compreach S_2$ if and only if $S_1$ and $S_2$
  are deadlock-free and livelock-free. By Th.~\ref{thm:deadlock-free},
  \ref{thm:unfoldings} and \ref{thm:livelock-free}, this is equivalent
  to: for any $M_1 \models S_1$ and $M_2 \models S_2$, $M_1 \times
  M_2$ is deadlock-free and livelock-free, that is, $M_1 \times M_2$
  is terminating. \qed
\end{proof}

This theorem allows \emph{independent implementability} of \masss{}:
given two \masss{} with compatible reachability, each specification
may be implemented independently from the other while keeping the
guarantee that the composition of the resulting implementations will
be terminating and thus satisfy by construction a
reachability property.
\section{Correction of MAS with Incompatible Reachability}
\label{sec:clean}

We now define an operation that, given two \masss{} $S_1$ and $S_2$
with incompatible reachability, returns a \mass{} refining $S_1$ with
a compatible reachability with~$S_2$.

\subsection{Deadlock correction}

First, given two non-deadlock-free \masss{} $S_1$ and $S_2$, we
propose to refine $S_1$ such that the obtained \mass{} $S'_1$ is
deadlock-free with $S_2$.  For this, we iterate through all the
non-deadlock-free pairs of states $(q_1, q_2)$ and remove the elements
of the acceptance set of $q_1$ which may cause a deadlock, as
described in \autoref{alg:dead-correct}. Note that it may return an
empty specification, because of $\rho$, which then means that for any
model $M_1$ of $S_1$, there exists a model $M_2$ of $S_2$ such that
$M_1 \times M_2$ has a deadlock.

\begin{theorem}[Deadlock correction]
  \label{lem:dead-correct}
  Given two \masss{} $S_1$ and $S_2$, $M_1 \models S_1$ is such that for any
  $M_2 \models S_2$, $M_1 \times M_2$ is deadlock-free if and only if $M_1
  \models \text{dead\_correction}(S_1, S_2)$.
\end{theorem}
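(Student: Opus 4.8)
The plan is to prove the two implications of the iff separately, using \autoref{thm:deadlock-free} as the bridge between the state-level criterion $\Dead$ and the model-level notion of a deadlock. Write $S_1' := \text{dead\_correction}(S_1, S_2)$. First I would record two structural facts about \autoref{alg:dead-correct}. (a) $S_1'$ \emph{refines} $S_1$: every model of $S_1'$ is a model of $S_1$ — the algorithm only shrinks acceptance sets of $S_1$ and deletes states and transitions, interleaved with the model-preserving normalisation $\rho$ of \autoref{prop:nf}, and each such step can only shrink the set of models, since a simulation relation witnessing $M_1 \models S_1'$ witnesses $M_1 \models S_1$ verbatim. (b) At its fixpoint the algorithm returns a specification in which no reachable pair of $\Un(S_1') \times \Un(S_2)$ violates $\Dead$ — i.e. $S_1'$ and $S_2$ are deadlock-free in the sense of \autoref{def:deadlock-free} — because otherwise a non-deadlock-free pair would remain and the loop would not have terminated. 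Given (a) and (b), the $(\Leftarrow)$ direction is immediate: if $M_1 \models S_1'$ then $M_1 \models S_1$ by (a), and for any $M_2 \models S_2$, deadlock-freeness of $S_1'$ and $S_2$ together with \autoref{thm:deadlock-free} gives that $M_1 \times M_2$ is deadlock-free.

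For $(\Rightarrow)$, assume $M_1 \models S_1$ via a simulation $\pi$ and that $M_1 \times M_2$ is deadlock-free for every $M_2 \models S_2$; I claim $M_1 \models S_1'$, with witness the restriction of $\pi$ to the states that survive in $S_1'$. The crux is that $M_1$ never uses a deleted state or a deleted acceptance-set element: for every reachable $r_1$ with $(r_1, q_1) \in \pi$, the state $q_1$ still exists in $S_1'$ and $\ready(r_1) \in \Acc_1'(q_1)$. I would prove this by contradiction, by induction over the iterations of the algorithm: if $\ready(r_1) = X_1$ were deleted from $\Acc_1(q_1)$ at some iteration, that deletion was triggered by a partner $q_2$ (with $(q_1, q_2)$ reachable in $\Un(S_1) \times \Un(S_2)$, using that deletions and $\rho$-pruning only shrink the reachable product) and an $X_2 \in \Acc_2(q_2)$ with $X_1 \cap X_2 = \emptyset$. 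Using that $q_2$ is reachable together with $q_1$ and that $S_2$ is in normal form, build $M_2 \models S_2$ with a state $r_2$ realising $X_2$ such that $(r_1, r_2)$ is reachable in $M_1 \times M_2$; then $\ready(r_1) \cap X_2 = \emptyset$, and after checking that $(r_1, r_2)$ can be taken non-marked it is a deadlock, contradicting the hypothesis. Deletion of a whole state is the same argument with $\Acc_1(q_1)$ becoming empty. Once the claim holds, the remaining obligations for $M_1 \models S_1'$ are routine: transitions are preserved because, by the claim, $\ready(r_1) \subseteq \bigcup \Acc_1'(q_1)$, so no transition deleted when re-establishing $\delta,\Acc$-consistency is ever fired, and the marking condition holds because $F_1'$ agrees with $F_1$ on surviving states.

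The main obstacle is the completeness direction, and within it three points that must be reconciled with the one-directional implication $r \in G \Rightarrow q \in F$ of \autoref{def:sat}. First, the interplay of iteration and reachability: each deletion is justified by reachability of $(q_1, q_2)$ at the moment it is performed, and one must check that the witness $M_2$ built above stays valid as the product shrinks. Second, the case $X_2 = \emptyset$: realising the empty ready-set forces its implementation $r_2$ to be marked, so $(r_1, r_2)$ is a deadlock only when $r_1$ is unmarked, and here the normal-form hypotheses — notably $F,\Acc$-consistency, which ties $\emptyset \in \Acc_1(q_1)$ to $q_1 \in F_1$ — are what make the algorithm's deletion match the set of safe models. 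Third, a state of $S_1$ may have several partners in $S_2$, so a per-state deletion could be too coarse; this is handled by the fact that \autoref{alg:dead-correct} works on a representation in which each state has a unique partner (e.g. after unfolding $S_1$ with respect to $S_2$, which is model-preserving), so that the reachable word determines the partner. I expect the second point — the marked-sink bookkeeping — to require the most care.
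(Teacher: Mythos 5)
Your $(\Leftarrow)$ direction is fine and takes a mildly different route from the paper: the paper argues the contrapositive by exhibiting a deadlocked pair $(r_1,r_2)$ and checking that the offending ready-set was filtered out of $\Acc_1'(q_1)$, whereas you note that the output of \autoref{alg:dead-correct} refines $S_1$ and is, together with $S_2$, deadlock-free in the sense of \autoref{def:deadlock-free}, and then invoke \autoref{thm:deadlock-free}. Both arguments work.

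The gap is in your $(\Rightarrow)$ direction, exactly at the third obstacle you name and then dismiss. \autoref{alg:dead-correct} does \emph{not} operate on an unfolded, single-partner representation: it filters $\Acc_1(q_1)$ once for every partner $q_2$ of $q_1$, and no unfolding is performed anywhere in dead\_correction or in $\rho_\mathcal{T}$ (unfolding is introduced only for the livelock analysis). Without single partners your witness construction breaks down: reachability of $(q_1,q_2)$ in $\Un(S_1)\times\Un(S_2)$ only yields \emph{some} word reaching both states, while the state $r_1$ of $M_1$ with $\ready(r_1)=X_1$ may be reachable only by words that lead in $S_2$ to a different partner $q_2'$ (or outside the traces of $S_2$), so no $M_2$ ever pairs $r_1$ with an implementation of $q_2$ and no deadlock, hence no contradiction, can be produced. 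This is not merely a missing lemma: with several partners the per-state filtering is genuinely too coarse. For instance, let $S_1$ have a state $q_1$ reachable by both $a$ and $b$ with $\Acc_1(q_1)=\{\{c\},\{d\}\}$, both $c$ and $d$ leading to a marked sink, and let $S_2$ offer $\{\{a,b\}\}$ initially, then $\Acc=\{\{c\}\}$ after $a$ and $\Acc=\{\{d\}\}$ after $b$, each to a marked sink: the model of $S_1$ that plays $c$ after $a$ and $d$ after $b$ deadlocks with no model of $S_2$, yet dead\_correction empties $\Acc_1(q_1)$ and returns $S_\bot$. So your completeness argument cannot be completed for \autoref{alg:dead-correct} as written; you would need to make the unfolding (or the single-partner hypothesis, which does hold in the quotient situation $S_1\prequotient S_2$ versus $S_2$) an explicit part of the statement. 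For comparison, the paper's own $(\Rightarrow)$ sidesteps the issue by assuming deadlock-freeness of the product for \emph{all} models of $S_1$, deducing from \autoref{thm:deadlock-free} that no reachable pair violates $\Dead$ and that the algorithm returns $S_1$ unchanged; it never addresses the per-model claim that your argument, rightly, tries to establish.
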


\begin{proof}
  $(\Rightarrow)$ Assume that for any $M_1 \models S_1$ and $M_2 \models S_2$,
  $M_1 \times M_2$ is deadlock-free. By \autoref{thm:deadlock-free}, $S_1$ and
  $S_2$ are deadlock-free, which implies that there is no pair of states $(q_1,
  q_2)$ such that $\neg\Dead(q_1, q_2)$. Thus, the set dead\_pairs in
  \autoref{alg:dead-correct} is empty and $\text{dead\_correction}(S_1, S_2) =
  S_1$, so $M_1 \models \text{dead\_correction}(S_1, S_2)$. \\

  $(\Leftarrow)$ Assume that there exists an $M_2 \models S_2$ such that $M_1
  \times M_2$ has a deadlock pair of states $(r_1, r_2)$. By
  \autoref{thm:deadlock-free}, this implies that $S_1$ and $S_2$ are not
  deadlock-free and thus that there exists a pair of states $(q_1, q_2)$
  (implemented by $(r_1, r_2)$) reachable in $\Un(S_1) \times \Un(S_2)$ such
  that $\neg\Dead(q_1, q_2)$. Then, in dead\_correction($S_1$, $S_2$), either
  the acceptance set of $q_1$ has been reduced so that $\Compat(\Acc_1'(q_1),
  \Acc_2(q_2))$ is true and $\Dead(q_1, q_2)$ or $q_1$ is not reachable anymore
  and then $(q_1, q_2)$ is not reachable in $\Un(S_1) \times
  \Un(\text{dead\_correction}(S_1, S_2)$. Consequently, either $\ready(r_1)
  \not\in \Acc_1'(q_1)$ or $(r_1, q_1) \not\in \pi$, and thus $M_1$ is not a model
  of $\text{dead\_correction}(S_1, S_2)$. \qed
\end{proof}

\begin{algorithm}
\caption{dead\_correction ($S_1$: MAS, $S_2$: MAS): MAS}
\label{alg:dead-correct}
\begin{algorithmic}[1]
  \STATE $S'_1$ $\leftarrow$ $S_1$ 
  \FORALL{$(q_1, q_2)$ such that $\neg\Dead(q_1, q_2)$}
    \IF{$\Acc_2(q_2) = \{\emptyset\}$}
      \INLIFELSE{$\emptyset \in \Acc'_1(q_1)$}{$\Acc_1'(q_1) \leftarrow \{\emptyset\}$}{$\Acc_1'(q_1) \leftarrow \emptyset$}
    \ELSE
      \STATE $\Acc_1'(q_1) \leftarrow \{ X_1 \mid X_1 \in \Acc'_1(q_1) \land \forall X_2 \in \Acc_2(q_2): X_1 \cap X_2 \neq \emptyset \}$
    \ENDIF
  \ENDFOR
  \RETURN $\rho(S'_1)$
\end{algorithmic}
\end{algorithm}

\subsection{Livelock correction}

Secondly, given $S_1$ and $S_2$ two deadlock-free \masss{}, we propose to refine
$S_1$ such that the obtained specification $S'_1$ is livelock-free with $S_2$.

There are two ways to prevent livelocks from occuring in the product of the
models of two \masss{}: removing some transitions so that states from which it
is not possible to guarantee termination will not be reached and forcing some
transitions to be eventually realized in order to guarantee that it will be
possible to leave cycles without marked states. For this last method, we
introduce marked acceptance specifications with priorities that are \masss{}
with some priority transitions which have to be eventually realized.

\begin{definition}[\mass{} with priorities]
  A marked acceptance specification with priorities (\massp{}) is a
  tuple $(Q, q^0, \delta, \Acc, P, F)$ where $(Q, q^0, \delta, \Acc,
  F)$ is a \mass{} and $P : 2^{2^{Q \times \Sigma}}$ is a set of
  \emph{priorities}.
\end{definition}

\begin{definition}[Satisfaction]
  An automaton $M$ implements a \massp{} $S$ if $M$ implements the underlying
  \mass{} and for all $\mathcal P \in P$, either $\forall (q, a) \in \mathcal P:
  \forall r: (r, q) \not\in \pi$ or $\exists (q, a) \in \mathcal P: \exists r:
  (r, q) \in \pi \land a \in \ready(r)$.
\end{definition}

Intuitively, $P$ represents a conjunction of disjunctions of
transitions: at least one transition from each element of $P$ must be
implemented by the models of the specification.

Let $S_1$ and $S_2$ be two \masss{} and $q_1$ a state of $S_1$ such
that $q_1$ belongs to a livelock. Then, there exists a cycle $\mathcal
C_1$ in $S_1$ and its partner $\mathcal C_2$ in $S_2$ such that the
conditions given in \autoref{def:livelock-test} are false. Given these
cycles, \autoref{alg:live-correct-cycle} ensures that the possible
livelock will not happen, either by adding some priorities or by
removing some transitions. 

\begin{algorithm}[h]
\caption{live\_corr\_cycle ($S_1$: MASp, $\mathcal C_1$: Cycle, $S_2$: MAS, $\mathcal C_2$: Cycle): MASp}
\label{alg:live-correct-cycle}
\begin{algorithmic}[1]
  \IF{$\mathcal A_{\mathcal C_2} \neq \emptyset$}
    \label{alg:live-correct:1}
    \STATE $Q_A \leftarrow \{q_1 \mid Q_2(q_1) \in \dom(\mathcal A_{\mathcal
      C_2}) \land \forall A \in \mathcal A_{\mathcal C_2}(Q_2(q_1)): A \cap
      \ready(q_1) \neq \emptyset\}$
    \IF{$Q_A \neq \emptyset$}
      \label{alg:live-correct:prio-start}
      \STATE $P \leftarrow \{\bigcup_{1 \leq i \leq |Q_A|} \{(q_i, a)
        \mid a \in X_i\} \mid X_i \in \{A \cap \ready(q_i) \mid A \in \mathcal
        A_{\mathcal C_2}(Q_2(q_i))\}\}$
      \RETURN $(Q_1, q_1^0, \delta_1, \Acc_1, P_1 \cup P, F_1)$
      \label{alg:live-correct:prio-end}
    \ENDIF \label{alg:live-correct:1end}
  \ELSIF{$\dom(\mathcal C_2) \cap F_2 = \emptyset$}
    \label{alg:live-correct:2}
    \STATE $\Acc' \leftarrow \Acc_1$
    \FORALL{$q_1 \in \{Q_1(q_2) \mid q_2 \in \dom(\mathcal O_{\mathcal C_2})\}$}
      \STATE $\Acc'(q_1) \leftarrow \{X \mid X \in \Acc_1(q_1) \land \forall O \in
        \mathcal O_{\mathcal C_2}(Q_2(q_1)): X \cap O \neq \emptyset\}$
    \ENDFOR
    \RETURN $\rho((Q_1, q_1^0, \delta_1, \Acc', P_1, F_1))$
      \label{alg:live-correct:2end}
  \ENDIF\label{alg:live-correct:3}
  \STATE $\Acc' \leftarrow \Acc_1$
  \FORALL{$q_1 \in Q_1$}
    \STATE $\Acc'(q_1) \leftarrow \{X \mid X \in \Acc_1(q_1) \land \forall a \in
      X: \delta(q_1, a) \not\in \dom(\mathcal C_1)\}$
  \ENDFOR
  \RETURN $\rho((Q_1, q_1^0, \delta_1, \Acc', P_1, F_1))$
    \label{alg:live-correct:3end}
\end{algorithmic}
\end{algorithm}

We then iterate over the possible cycles, fixing those which may cause a
livelock , as described in \autoref{alg:live-correct}.

\begin{algorithm}[h]
\caption{live\_correction ($S_1$: MAS, $S_2$: MAS): MASp}
\label{alg:live-correct}
\begin{algorithmic}[1]
  \STATE $S_1' \leftarrow (Q_1, q_1^0, \delta_1, \Acc_1, \emptyset, F_1)$
  \FORALL{$\mathcal C_1 \in \CycleI(S_1)$ s.t. $\forall q_1 \in \dom(\mathcal C_1): |Q_2(q_1)| = 1$ and $\neg\Live(\mathcal C_1, S_2)$}
      \label{alg:live-correct:test-live}
      \STATE $S_1' \leftarrow$ live\_corr\_cycle$(S_1', \mathcal C_1, S_2, \{Q_2(q) \mapsto \mathcal C_1(q) \mid q \in \dom(\mathcal C_1)\})$
  \ENDFOR
  \RETURN $S_1'$
\end{algorithmic}
\end{algorithm}

\begin{theorem}[Livelock correction]
  \label{lem:live-correct}
  Given two \masss{} $S_1$ and $S_2$, $M_1 \models S_1$ is such that for any
  $M_2 \models S_2$, $M_1 \times M_2$ is livelock-free if and only if $M_1
  \models \text{live\_correction}(S_1, S_2)$.
\end{theorem}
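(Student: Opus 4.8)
The plan is to prove the two implications separately, leaning on the machinery already developed: the characterization of implementable cycles, the livelock-freeness criterion of \autoref{def:livelock-test}, and \autoref{thm:livelock-free}. Throughout I would work with the unfoldings of $S_1$ and $S_2$ (legitimate by \autoref{thm:unfoldings}, which preserves models), so that the single-partner hypotheses needed by \autoref{alg:live-correct} and \autoref{def:livelock-test} are met; I would first check that $\text{live\_correction}$ commutes appropriately with unfolding, or simply absorb the unfolding into the statement as is done elsewhere in the paper.

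For the $(\Leftarrow)$ direction, suppose $M_1 \models \text{live\_correction}(S_1,S_2)$, i.e. $M_1$ implements the \massp{} $S_1'$ produced by iterating \autoref{alg:live-correct-cycle} over all problematic cycles $\mathcal C_1$. I would argue by contradiction: if some $M_2 \models S_2$ gave a livelock in $M_1 \times M_2$, then by the $(\Rightarrow)$ analysis inside the proof of \autoref{thm:livelock-free} there is an implementable cycle $\mathcal C_1$ implemented in $M_1$, with partner $\mathcal C_2$ implemented in $M_2$, violating all three conditions of \autoref{def:livelock-test}; hence $\neg\Live(\mathcal C_1,S_2)$ and this cycle was processed by \autoref{alg:live-correct-cycle}. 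Then I would do a case split matching the branches of that algorithm. If $\mathcal A_{\mathcal C_2}\neq\emptyset$ and $Q_A\neq\emptyset$: the added priority set $P$ forces any model of $S_1'$ implementing $\mathcal C_1$ to realize, from some state of $Q_A$, an action compatible with $\mathcal A_{\mathcal C_2}$, so a leaving transition survives in the product — contradiction. If $\mathcal A_{\mathcal C_2}\neq\emptyset$ but $Q_A=\emptyset$, or $\mathcal A_{\mathcal C_2}=\emptyset$ with $\dom(\mathcal C_2)\cap F_2=\emptyset$, or the final fall-through branch: the acceptance-set restriction $\Acc'$ removes exactly those ready sets that would let $\mathcal C_1$ (resp. the offending part of it) be implemented without a surviving leaving transition, so $M_1$ cannot in fact implement $\mathcal C_1$ (or implements it only in a livelock-free way) — again a contradiction with the assumed livelock. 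The $\rho$ in the return values only removes states that no longer lead to a marked state, which cannot reintroduce a livelock.

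For the $(\Rightarrow)$ direction, suppose $M_1 \models S_1$ and for every $M_2 \models S_2$ the product $M_1 \times M_2$ is livelock-free. I need $M_1 \models \text{live\_correction}(S_1,S_2)$. Since $M_1 \models S_1$, it implements the underlying \mass{} of $S_1'$ (the structure is only restricted, never enlarged); it remains to check, for each processed cycle $\mathcal C_1$, that $M_1$ respects the modification made by \autoref{alg:live-correct-cycle}. If $M_1$ violated the added priorities $P$, then $M_1$ implements $\mathcal C_1$ with no leaving transition from any state of $Q_A$, and one can build $M_2\models S_2$ implementing $\mathcal C_2$ realizing only the $\mathcal A_{\mathcal C_2}$-transitions from the corresponding partners (possible because the conditions of \autoref{def:livelock-test} fail) — their product has a livelock, contradiction. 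If instead $M_1$ used a ready set removed from some $\Acc'(q_1)$, then $M_1$ implements $\mathcal C_1$ with a leaving set disjoint from the relevant $\mathcal O_{\mathcal C_2}(Q_2(q_1))$ (or lies inside $\dom(\mathcal C_1)$ in the fall-through branch), and again the proof of \autoref{thm:livelock-free}$(\Leftarrow)$ furnishes an $M_2$ making the product livelock — contradiction. Hence $M_1$ satisfies every constraint added, i.e. $M_1 \models \text{live\_correction}(S_1,S_2)$.

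The main obstacle I expect is the bookkeeping in the $(\Leftarrow)$ direction when several cycles are corrected in sequence: \autoref{alg:live-correct-cycle} is applied to $S_1'$, whose set of implementable cycles and whose partner relation may have changed relative to $S_1$, so I must argue that processing the cycles of $S_1$ in turn is enough to kill \emph{all} livelocks — in particular that restricting acceptance sets or adding priorities does not create a new problematic cycle that escapes the iteration. I would handle this by noting that every cycle of $S_1'$ is a cycle of $S_1$ with a (pointwise) smaller or equal label, that priorities only shrink the model set, and that $\CycleI$ is monotone under these restrictions, so the fixpoint reasoning of \autoref{thm:livelock-free} still applies to the final $S_1'$; the "two slightly different cycles / unfolded cycles" subcase from the proof of \autoref{thm:livelock-free} reduces, as there, to the single-cycle case already covered.
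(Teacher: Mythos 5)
Your proposal is correct in substance, and one of its two halves coincides with the paper's argument: your $(\Leftarrow)$ direction (assume $M_1 \models \text{live\_correction}(S_1,S_2)$ and rule out a livelock by contradiction) is just the contrapositive of the paper's own case analysis, which walks through the three branches of \autoref{alg:live-correct-cycle} and argues that the added priorities, the restricted acceptance sets, or the suppression of the whole cycle exclude precisely the models that could livelock with some $M_2$; your level of detail there matches the paper's. Where you genuinely diverge is the $(\Rightarrow)$ direction. The paper proves it only under the stronger hypothesis that \emph{every} model of $S_1$ has livelock-free products with every model of $S_2$: then \autoref{thm:livelock-free} yields $\Live(\mathcal C_1,S_2)$ for every implementable cycle, the test in \autoref{alg:live-correct} never fires, and $\text{live\_correction}(S_1,S_2)=S_1$. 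You instead prove the per-model claim that the theorem actually states: for a fixed $M_1$ all of whose products are livelock-free, you verify each priority and each acceptance restriction added by \autoref{alg:live-correct-cycle}, and whenever $M_1$ would violate one you construct an adversarial $M_2$ (realizing only the $\mathcal A_{\mathcal C_2}$-transitions, or leaving the cycle only from the offending partner state), reusing the witness constructions from the $(\Leftarrow)$ half of \autoref{thm:livelock-free} to contradict the hypothesis. This buys faithfulness to the statement --- the paper's $(\Rightarrow)$ as written does not cover the case where $S_1$ and $S_2$ fail to be livelock-free yet this particular $M_1$ is safe, whereas your argument does --- at the price of the extra bookkeeping you rightly flag: the single-partner/unfolding hypothesis needed for $\Live$ and for $Q_2(\cdot)$ to make sense, and the fact that \autoref{alg:live-correct} applies live\_corr\_cycle to the successively modified $S_1'$, so one must argue (as you sketch via monotonicity of $\CycleI$ under acceptance restriction and the fact that priorities only shrink the model set) that iterating over the cycles of the original $S_1$ kills all livelocks; the paper leaves both of these points implicit.
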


\begin{proof}
  $(\Rightarrow)$ Assume that for any $M_1 \models S_1$ and $M_2
  \models S_2$, $M_1 \times M_2$ is livelock-free. By
  \autoref{thm:livelock-free}, $S_1$ and $S_2$ are livelock-free which
  means, by \autoref{def:livelock-free}, that for any implementable
  cycle $\mathcal C_1$ in $S_1$ such that its states have a partner in
  $S_2$, we have $\Live(\mathcal C_1, S_2)$. In this case, the test at
  line \ref{alg:live-correct:test-live} of \autoref{alg:live-correct}
  is always false and so $\text{live\_correction}(S_1, S_2)$ returns
  $S_1$, of
  which $M_1$ is a model by hypothesis. \\

  $(\Leftarrow)$ Assume that there exists an $M_2 \models S_2$ such that $M_1
  \times M_2$ has a livelock.
  \begin{compactitem}
    \item If there exists a cycle $\mathcal C_1 \in \CycleI(S_1)$ which is
      implemented in $M_1$ by the states of the loop in which there is a
      livelock when combined with $M_2$. Thus, live\_correction\_cycle will be
      called with $\mathcal C_1$. There are three cases:
      \begin{compactitem}
        \item If $\mathcal A_{\mathcal C_2}$ is not empty, some transitions are
          present in all the models of $S_2$ implementing $\mathcal C_2$, so the
          models of $S_1$ should realize (at least) one of these transitions
          once. If it is possible, some priorities are added, see lines
          \ref{alg:live-correct:prio-start} to \ref{alg:live-correct:prio-end}
          of \autoref{alg:live-correct-cycle}. This addition will only remove
          the models of $S_1$ that never realize any transition in $\mathcal
          A_{\mathcal C_2}$ and thus that will have a livelock with some models
          of $M_2$ (which only realize the transition of $\mathcal A_{\mathcal
          C_2}$.
        \item If $\mathcal A_{\mathcal C_2}$ is empty but there is no marked
          state in $\mathcal C_2$, all the models of $S_2$ implementing
          $\mathcal C_2$ will eventually realize a transition of $\mathcal
          O_{\mathcal C_2}$ in order to reach a marked state (as there is none
          in the cycle). The only way to avoid a livelock with any model of
          $S_2$ is to realize all the transitions that these models may use to
          reach a marked state, which is done in lines \ref{alg:live-correct:2}
          to \ref{alg:live-correct:2end}.
        \item Otherwise, there will always be a possible livelock with some
          models of $S_2$, so the only possibility is to disallow all the models
          which implement this cycle, which is done in lines
          \ref{alg:live-correct:3} to \ref{alg:live-correct:3end}.
      \end{compactitem}
      So $M_1$ is not a model of the \massp{} returned by
      live\_correction\_cycle for $\mathcal C_1$ and thus it is not a model of
      $\text{live\_correction}(S_1, S_2)$.
    \item Otherwise, multiple cycles are implemented simultaneously and there
      will also be livelocks in the models which implement only one of the
      cycles. As argued in the previous item, applying live\_correction\_cycle
      for these cycles will generate a specification forbidding the
      corresponding models, and then $M_2$ will not be a model of the resulting
      specification as it only combines the behavior of these models. \qed
  \end{compactitem}
\end{proof}

As a result, by applying successively dead\_correction and
live\_correction, we can define the following operation $\rho_\mathcal T$:
\[
\rho_\mathcal T(S_1,S_2) = \text{live\_correction}(\text{dead\_correction}(S_1, S_2), S_2)
\]
Given two \masss{} $S_1$ and $S_2$, it refines the set of models of
$S_1$ as precisely as possible so that their product with any model of
$S_2$ is terminating.

\begin{theorem}[Incompatible reachability correction]
  \label{thm:rho-T}
  Given two \masss{} $S_1$ and $S_2$, $M \models \rho_\mathcal T(S_1, S_2)$ if
  and only if $M \models S_1$ and for any $M_2 \models S_2$, $M \times M_2$ is
  terminating.
\end{theorem}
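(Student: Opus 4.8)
The plan is simply to compose the two correction theorems of this section. Write $S_1' = \text{dead\_correction}(S_1, S_2)$, so that $\rho_\mathcal T(S_1, S_2) = \text{live\_correction}(S_1', S_2)$. First I would observe that $\text{dead\_correction}$ returns a genuine \mass{} (its last step is a call to $\rho$, and even in the degenerate case it returns $S_\bot$, which we assumed to be a \mass{} in normal form), so it is legitimate to pass $S_1'$ to $\text{live\_correction}$ and, in particular, to invoke \autoref{lem:live-correct} with $S_1'$ in place of its first argument. The only point to keep in mind is that the output $\rho_\mathcal T(S_1, S_2)$ is a \massp{}, so the symbol $\models$ on the left-hand side of the statement refers to satisfaction of a \mass{} with priorities.

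For the forward implication I would assume $M \models \rho_\mathcal T(S_1, S_2) = \text{live\_correction}(S_1', S_2)$. By \autoref{lem:live-correct} this is equivalent to: $M \models S_1'$ and, for every $M_2 \models S_2$, $M \times M_2$ is livelock-free. The first conjunct, by \autoref{lem:dead-correct}, is in turn equivalent to: $M \models S_1$ and, for every $M_2 \models S_2$, $M \times M_2$ is deadlock-free. Putting the two together, $M \models S_1$ and, for every $M_2 \models S_2$, $M \times M_2$ is deadlock-free and livelock-free; since a universal quantifier distributes over a conjunction, this is precisely ``$M \models S_1$ and, for every $M_2 \models S_2$, $M \times M_2$ is terminating'', using the definition of termination as deadlock-freeness together with livelock-freeness. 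The converse direction runs the same chain of equivalences backwards: if $M \models S_1$ and $M \times M_2$ is terminating for all $M_2 \models S_2$, then in particular $M \times M_2$ is deadlock-free for all such $M_2$, whence $M \models S_1'$ by \autoref{lem:dead-correct}; and $M \times M_2$ is also livelock-free for all such $M_2$, so \autoref{lem:live-correct} yields $M \models \text{live\_correction}(S_1', S_2) = \rho_\mathcal T(S_1, S_2)$.

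Since the whole argument is a straightforward composition of two already-established characterizations, there is no real obstacle; the only things requiring a line of care are (i) checking that the intermediate specification $S_1'$ legitimately satisfies the hypotheses of \autoref{lem:live-correct} (it is a \mass{}, normalized by $\rho$), and (ii) the elementary observation that ``for all $M_2$: ($P$ and $Q$)'' coincides with ``(for all $M_2$: $P$) and (for all $M_2$: $Q$)'', which is exactly what lets the separate deadlock and livelock conditions be merged into the single word ``terminating''. For intuition one may contrast this model-level statement with \autoref{thm:compreach}, its specification-level analogue, although the latter is not actually used in the proof.
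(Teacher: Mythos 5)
Your proof is correct and takes essentially the same route as the paper: you unravel $\rho_\mathcal T$ into its two stages and chain the characterizations of dead\_correction (\autoref{lem:dead-correct}) and live\_correction (\autoref{lem:live-correct}), recovering termination as the conjunction of deadlock- and livelock-freeness. If anything, your direct composition of the two model-level equivalences is more explicit than the paper's own terse argument, which detours through the specification-level results \autoref{thm:deadlock-free} and \autoref{thm:livelock-free} and leaves the converse direction and the conjunct $M \models S_1$ largely implicit.
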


\begin{proof}
  For any $M \models \rho_\mathcal T(S_1, S2)$ and $M_2 \models S_2$, $M \times
  M_2$ is terminating if and only if $M \times M_2$ is deadlock-free and
  livelock-free. By theorems \ref{thm:deadlock-free} and
  \ref{thm:livelock-free}, this is true if and only if $\rho_\mathcal T(S_1,
  S_2)$ and $S_2$ are deadlock-free and livelock-free, which is true by
  definition of $\rho_\mathcal T$ and Theorems \ref{lem:dead-correct} and
  \ref{lem:live-correct}. \qed
\end{proof}

\section{Quotient Operation of MAS}
\label{sec:quotient}

We can now combine the pre-quotient and cleaning operations to define the
quotient of two \masss{}.

\begin{definition}
  Given two \masss{} $S_1$ and $S_2$, their \emph{quotient} $S_1 /
  S_2$ is given by $\rho_\mathcal T(S_1 \prequotient S_2, S_2)$.
\end{definition}

\begin{theorem}[Soundness]
  Given two \masss{} $S_1$ and $S_2$ and an automaton $M \models S_1 / S_2$, for
  any $M_2 \models S_2$, $M \times M_2 \models S_1$.
\end{theorem}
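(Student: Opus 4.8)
The plan is to chain together the two main results that define the quotient: the correctness of the pre-quotient (Theorem~\ref{thm:prequo:correct}) and the incompatible reachability correction (Theorem~\ref{thm:rho-T}). Since $S_1 / S_2$ is by definition $\rho_\mathcal{T}(S_1 \prequotient S_2, S_2)$, a model $M$ of $S_1 / S_2$ is exactly a model of the argument specification $S_1 \prequotient S_2$ that additionally behaves well with every model of $S_2$.

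First I would apply Theorem~\ref{thm:rho-T} with the specification $S_1 \prequotient S_2$ in the role of ``$S_1$'' and $S_2$ unchanged. This immediately gives that $M \models \rho_\mathcal{T}(S_1 \prequotient S_2, S_2)$ if and only if $M \models S_1 \prequotient S_2$ and, for any $M_2 \models S_2$, the product $M \times M_2$ is terminating. So from $M \models S_1 / S_2$ we extract two facts: (i) $M \models S_1 \prequotient S_2$, and (ii) $M \times M_2$ is terminating for every $M_2 \models S_2$.

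Next I would fix an arbitrary $M_2 \models S_2$ and invoke Theorem~\ref{thm:prequo:correct}: since $M \models S_1 \prequotient S_2$ (fact (i)) and $M \times M_2$ is terminating (fact (ii)), the conclusion of that theorem yields $M \times M_2 \models S_1$. Because $M_2$ was arbitrary, this holds for all models of $S_2$, which is precisely the statement of soundness.

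There is essentially no obstacle here — the real work was done in Theorems~\ref{thm:prequo:correct} and~\ref{thm:rho-T}, and the only thing to be careful about is matching up the hypotheses: Theorem~\ref{thm:prequo:correct} requires both that $M$ models the pre-quotient and that the product is terminating, and Theorem~\ref{thm:rho-T} supplies exactly those two conditions, so the composition goes through cleanly. The proof is therefore a short two-line argument citing the two earlier results.

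\begin{proof}
  By definition, $S_1 / S_2 = \rho_\mathcal{T}(S_1 \prequotient S_2, S_2)$, so
  $M \models S_1 / S_2$ means $M \models \rho_\mathcal{T}(S_1 \prequotient S_2, S_2)$.
  Applying \autoref{thm:rho-T} with $S_1 \prequotient S_2$ in place of $S_1$, this
  is equivalent to: $M \models S_1 \prequotient S_2$ and, for any $M_2 \models S_2$,
  $M \times M_2$ is terminating. Now let $M_2 \models S_2$ be arbitrary. Since
  $M \models S_1 \prequotient S_2$ and $M \times M_2$ is terminating,
  \autoref{thm:prequo:correct} gives $M \times M_2 \models S_1$. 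As $M_2$ was
  arbitrary, $M \times M_2 \models S_1$ for every $M_2 \models S_2$. \qed
\end{proof}
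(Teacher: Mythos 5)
Your proof is correct and follows essentially the same route as the paper's own argument: apply \autoref{thm:rho-T} to extract termination of $M \times M_2$ (and membership in $S_1 \prequotient S_2$), then conclude with \autoref{thm:prequo:correct}. The only difference is that you make explicit the fact $M \models S_1 \prequotient S_2$, which the paper leaves implicit when invoking the pre-quotient correctness theorem.
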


\begin{proof}
  By \autoref{thm:rho-T}, we know that for any $M_2 \models S_2$, $M \times M_2$
  is terminating. Thus, \autoref{thm:prequo:correct} implies that $M \times M_2
  \models S_1$. \qed
\end{proof}

\begin{theorem}[Completeness]
  Given two \masss{} $S_1$ and $S_2$ and an automaton $M$ such that $\forall M_2
  \models S_2: M \times M_2 \models S_1$, then $\rho_u (M, S_2) \models S_1 /
  S_2$.
\end{theorem}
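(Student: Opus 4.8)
The plan is to chain together the completeness results already proved for the two components of the quotient, exactly mirroring how the soundness proof chains the corresponding correctness results. First I would unpack the hypothesis: we are given an automaton $M$ such that for all $M_2 \models S_2$, $M \times M_2 \models S_1$. The goal is to show $\rho_u(M, S_2) \models S_1 / S_2$, i.e. $\rho_u(M, S_2) \models \rho_\mathcal{T}(S_1 \prequotient S_2, S_2)$.

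The key step is to apply \autoref{thm:rho-T} with the \mass{} $S_1 \prequotient S_2$ in the role of ``$S_1$'' and $S_2$ unchanged: it tells us that $\rho_u(M,S_2) \models \rho_\mathcal{T}(S_1 \prequotient S_2, S_2)$ holds if and only if (a) $\rho_u(M, S_2) \models S_1 \prequotient S_2$, and (b) for every $M_2 \models S_2$, the product $\rho_u(M, S_2) \times M_2$ is terminating. So the proof reduces to establishing these two facts. Fact (a) is precisely \autoref{cor:prequo:complete}, which applies verbatim to our hypothesis: since $\forall M_2 \models S_2: M \times M_2 \models S_1$, we get $\rho_u(M, S_2) \models S_1 \prequotient S_2$. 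For fact (b), I would invoke \autoref{thm:unnec}: it gives that for every $M_2 \models S_2$, the automata $M \times M_2$ and $\rho_u(M, S_2) \times M_2$ are bisimilar; since $M \times M_2 \models S_1$, and in particular $M \times M_2$ is terminating (satisfaction is only defined for terminating automata, per \autoref{def:sat}), bisimilarity transfers terminfor\-a­tion to $\rho_u(M, S_2) \times M_2$. With (a) and (b) in hand, \autoref{thm:rho-T} immediately yields $\rho_u(M, S_2) \models \rho_\mathcal{T}(S_1 \prequotient S_2, S_2) = S_1 / S_2$.

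The only subtlety — and the place where I would be most careful — is checking that termination of $M \times M_2$ really is available and really does transfer across bisimilarity. Termination (deadlock-freeness plus livelock-freeness) is a property defined purely in terms of $\ready$, the marked states, and the reachability/cycle structure of an automaton, all of which are preserved by bisimilarity as defined in the background section (bisimilar states have equal $\ready$ sets and agree on membership in $G$, and the relation is preserved under transitions); so the transfer is routine but worth one sentence. The other thing to note is that the direction we need from \autoref{thm:rho-T} is the ``if'' direction — from (a) and (b) conclude membership — which is exactly the contrapositive-free half of that equivalence, so no extra work is needed there. Altogether the proof is a short three-line composition of \autoref{cor:prequo:complete}, \autoref{thm:unnec}, and \autoref{thm:rho-T}, and no genuinely new argument is required.
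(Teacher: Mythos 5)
Your proposal is correct and follows essentially the same route as the paper: completeness of the pre-quotient (\autoref{cor:prequo:complete}) followed by the characterization of $\rho_\mathcal{T}$ (\autoref{thm:rho-T}). In fact you are slightly more careful than the paper's two-line proof, since you explicitly discharge the termination hypothesis of \autoref{thm:rho-T} via the bisimilarity statement of \autoref{thm:unnec}, a step the paper leaves implicit.
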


\begin{proof}
  We know by \autoref{cor:prequo:complete} that $\rho_u (M, S_2) \models S_1
  \prequotient S_2$. We then deduce by \autoref{thm:rho-T} that $\rho_u (M, S_2)
  \models S_1 / S_2$. \qed
\end{proof}

These theorems indicate that each specification $S_2$ and $S_1 / S_2$
may be implemented independently from the other and that the composition of
the resulting implementations will eventually be terminating and will
also satisfy $S_1$.

\section{Conclusion}

In this paper, we have introduced marked acceptance specifications. We
have developed several compositionality results ensuring a
reachability property by construction and, in particular, a sound and
complete quotient. Note that this framework can almost immediately be
enriched with a refinement relation, parallel product and conjunction
by exploiting the constructions available in~\cite{Raclet2008} and
\cite{CaillaudR12}, hence providing a complete specification theory as
advocated in~\cite{Raclet2011}.

Considering an acceptance setting instead of a modal one offers a gain
in terms of expressivity as MAS provide more flexibility than the
marked extension of modal specifications~\cite{CaillaudR12}. 
This benefit becomes essential for the quotient as may/must modalities
are not rich enough to allow for a complete operation. Consider indeed
the two MMS $S_1$ and $S_2$ in \autoref{fig:q-modal} in which optional
transitions are represented with dashed arrows while required
transitions are plain arrows. A correct and complete modal quotient in
this example would tell in the initial state of $S_1 / S_2$ that at
least one action between $a$ and $b$ is required. This cannot be
encoded by modalities but it can be correctly stated by the acceptance
set $\{ \{ a \}, \{ b \}, \{ a,b \}, \{ a, c \}, \{ b, c \}, \{ a, b,
c \}\}$.  

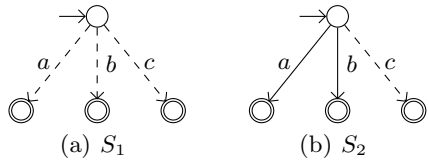
\begin{figure}
\centering
\subfigure[$S_1$]{
\begin{tikzpicture}
\path (1,0) node[state] (0) {$\ $};
\path (0,-1.25) node[marked] (1) {$\ $};
\path (1,-1.25) node[marked] (3) {$\ $};
\path (2,-1.25) node[marked] (2) {$\ $};
\draw[trans] ($ (0) + (-.5, 0) $) to (0);
\draw[trans,dashed] (0) to node[midway, left]{$a$} (1);
\draw[trans,dashed] (0) to node[midway, right]{$b$} (3);
\path (0) edge[trans,dashed] node[midway, right]{$c$} (2);
\end{tikzpicture}}
\hspace{0.5cm}
\subfigure[$S_2$]{
\begin{tikzpicture}
\path (1,0) node[state] (0) {$\ $};
\path (0,-1.25) node[marked] (1) {$\ $};
\path (1,-1.25) node[marked] (3) {$\ $};
\path (2,-1.25) node[marked] (2) {$\ $};
\draw[trans] ($ (0) + (-.5, 0) $) to (0);
\draw[trans] (0) to node[midway, left]{$a$} (1);
\draw[trans] (0) to node[midway, right]{$b$} (3);
\path (0) edge[trans,dashed] node[midway, right]{$c$} (2);
\end{tikzpicture}}
\caption{Two MMS showing that a modal quotient cannot
  exist\label{fig:q-modal}}
\end{figure}

Observe also that quotient of two MAS is heterogeneous in the sense
that its result may be a \massp{}. By definition, \massp{} explicitly
require to \emph{eventually} realize some transitions fixed in the
priority set $P$. By bounding the delay before the implementation of
the transitions, a \massp{} could become a standard \mass{} and the
quotient would then become homogeneous. Algorithms for bounding
\massp{} are left for future investigations.

\bibliographystyle{splncs03}
\bibliography{main}

\end{document}